\title{Invariant states of quantum birth and death chains}
\author{
	David~B\"ucher\\[1.4ex]%
	\begin{small}RTG \enquote{Mathematical Structures in Modern Quantum Physics},\end{small}\\[-1.2ex]%
	\begin{small}Universit\"at G\"ottingen, Bunsenstr. 3--5, D--37073 G\"ottingen, Germany\end{small}\\%
	\begin{small}E-mail address: dbuecher@uni-math.gwdg.de\end{small}\\%
}
\date{August 26, 2014}
\newcommand*{\hilb}{\mathscr}
\newcommand*{\mb}{\mathbf}
\newcommand*{\N}{\mathbb N}
\newcommand*{\Z}{\mathbb Z}
\newcommand*{\R}{\mathbb R}
\newcommand*{\C}{\mathbb C}
\renewcommand*{\O}{\mathcal O}
\newcommand*{\one}{\mathbbm 1}
\newcommand{\ie}{\mbox{i.e.}\xspace}
\renewcommand*{\H}{\hilb H}
\newcommand*{\B}{\hilb B}
\DeclareMathOperator{\UCP}{UCP}
\DeclareMathOperator{\Tr}{Tr}
\DeclareMathOperator{\spann}{span}
\DeclareMathOperator{\supp}{supp}
\newcommand{\BH}{\B(\H)}
\newcommand*{\ltwo}{\ell^2(\N_0)}
\newcommand*{\Bl}{\B(\ell^2(\N_0))}
\newcommand{\diag}{\text{diag}}
\newcommand{\id}{\textnormal{id}}
\newcommand{\skapro}[1]{\langle {#1} \rangle}
\newcommand{\Dom}{\textnormal{Dom}}
\theoremstyle{plain}
	\newtheorem{lemma}{Lemma}
	\newtheorem{prop}[lemma]{Proposition}
	\newtheorem{thm}[lemma]{Theorem}
\theoremstyle{definition}
	\newtheorem{defn}[lemma]{Definition}
\theoremstyle{remark}
\begin{document}

\maketitle

\begin{abstract}
A sufficient condition is given 
for a class of quantum birth and death chains on the 
non-negative integers to possess invariant states. 
The result is applied to generalised one-atom masers 
and to the Jaynes-Cummings one-atom maser 
with random interaction time and not necessarily 
diagonal atomic states. 
\end{abstract}

\subsection{Introduction}

A classical birth and death chain on the non-negative integers 
is a homogeneous discrete Markov process $(x_t)_{t\in\Z}$ 
on the state space $\N_0 = \{0,1,2,\ldots\}$, where only 
nearest-neighbour-transitions occur. Such a 
process is determined by its birth rates 
$\lambda_n = P(x_t = n+1 \,|\, x_{t-1} = n)$, for $n \in \N_0$, 
and death rates 
$\mu_n = P(x_t = n-1 \,|\, x_{t-1} = n)$, for $n \in \N$. 
Suppose that $\mu_n \neq 0$ for all $n \in \N$, and 
put 
$\pi_n := \frac{\lambda_0\lambda_1\cdots\lambda_{n-1}}{\mu_1\mu_2\cdots\mu_n}$, 
$\pi_0 := 1$. 
Then $(x_t)$ has an invariant state if and only if the 
sequence $(\pi_n)_n$ is summable. In that case, the 
invariant state is unique and its density is given 
by $\rho_n = \frac{\pi_n}{\sum_n \pi_n}$. 
In particular, $(x_t)$ has an invariant state 
whenever there exists a constant $c$ with 
$\frac{\lambda_n}{\mu_{n+1}} \leq c < 1$ for 
almost all $n \in \N$. 

In the present paper, we consider invariant states of 
quantum birth and death chains (QBDCs). 
For us, a QBDC is a quantum Markov chain on $\Bl$ in the sense of 
a unital completely positive (ucp) map $T: \Bl \to \Bl$, 
with the extra-condition that only nearest-neighbour transitions are allowed. 
This means that the transition rates 
$\Tr(e_{n,m}T(e_{k,l}))$, where $e_{n,m}, e_{k,l} \in \Bl$ 
denote matrix units, vanish whenever $|m-k|>1$ or 
$|n-l|>1$. 
Examples of QBDCs are provided by the one-atom maser as 
studied in, e.g. \cite{bruneau09, bruneau13}, 
and by its generalisations considered in 
\cite{BGKRSS13}. 
QBDCs are fundamentally different from 
\enquote{unitary} or 
\enquote{open quantum random walks} as 
studied in, e.g. \cite{konno02} or \cite{attal12}. 
States for such quantum random walks not only specify 
a \enquote{position} on $\N$ (or $\Z$), but in addition 
specify the state of a \enquote{coin}. 
Open quantum random walks where the space of states of 
the coin is trivial (i.e. $\C$) are in fact 
classical Markov chains. 

{\sloppy
In addition to the analogues of the classical birth and death rates 
$\lambda_n = \Tr(e_{n,n}T(e_{n+1,n+1}))$ and 
$\mu_n = \Tr(e_{n,n}T(e_{n-1,n-1}))$, 
a QBDC is characterised, among others, by the transition rates 
$\eta_n = \Tr(e_{n+1,n}T(e_{n,n}))$, see Figure \ref{fig:transition-rates}. 
In this essentially two-dimensional setting it seems no more 
possible to give an explicit formula for an invariant state, 
as it was in the classical case. 
Moreover, the condition $\frac{\lambda_n}{\mu_{n+1}} \leq c < 1$ 
is no more sufficient to guarantee the existence of 
an invariant state (see Section \ref{sec:generalised-maser} for an example). 
However, a simple sufficient criterion can be given which 
involves the parameters $\lambda_n,\mu_n$ and $\eta_n$ 
only. Namely, if $\kappa := \lim\inf_n \frac 1n \ln(\pi_n^{-1}) > 0$ 
and 
\begin{align}
	\lim\inf_n \frac{\lambda_n\mu_{n+1}}{4|\eta_n|^2} 
		> \frac{e^{-\kappa}}{(1 - e^{-\kappa})^2} \ , 
\end{align}
then $T$ has an invariant state (Theorem \ref{thm:main-result}). 
The proof combines ideas from \cite{FaR01} 
with a positivity criterion for tri-diagonal 
matrices. 
}

While previous works have often focused 
on QBDCs with $\eta_n = 0$ for $n \gg 0$, our result is 
suitable for QBDCs with $\eta_n \neq 0$. 
We apply it to the class of QBDCs, inspired by the micromaser 
experiment, which were considered in \cite{BGKRSS13}, 
and to the Jaynes-Cummings one-atom maser with random interaction time. 
In the latter case it is shown, in particular, that 
$\frac{\lambda_n}{\mu_{n+1}} \leq c < 1$ 
suffices to guarantee the existence of invariant states even for 
non-thermal (non-diagonal) and non-pure atomic states. 
Moreover, for non-pure atomic states, every initial state 
approaches the invariant state (Proposition \ref{prop:random-JC}).

\medskip

\emph{Acknowledgements:} 
I thank Andreas G\"artner, Walther Reu{\ss}wig, Kay Schwieger and 
Florian Steinberg for useful comments on an early draft of this note, 
and Prof. Burkhard K\"ummerer for his generous support. 
Much of this research was done while the author was 
affiliated with the Fachbereich Mathematik of 
Technische Universit\"at Darmstadt.

\begin{figure}[bht]
	\[
	\begin{xy}
		\xymatrix@C=.7pc@R=.7pc{
			% Zeile 1:
			\bullet
			\ar@/_/[rrrrdddd]|{\mu_1}
			\ar@/_/[rrrr]|{\eta_0}
			\ar@/_/[dddd]|{\overline{\eta_0}}
			\ar@(l,u)|(.7){\sigma_0}
			& & & & \bullet
			\ar@[gray]@/_/[rrrrdddd]|(.6){}
			\ar@[gray]@/_/[lllldddd]|(.6){}
			\ar@[gray]@/_/[llll]_{}
			\ar@[gray]@/_/[rrrr]_{}
			\ar@[gray]@/_/[dddd]_{}
			\ar@[gray]@(lu,ru)^{}
			& & & & \bullet
			\ar@[gray]@/_/[rrrrdddd]|(.6){}
			\ar@[gray]@/_/[lllldddd]|(.6){}
			\ar@[gray]@/_/[llll]_{}
			\ar@[gray]@/_/[rrrr]_{}
			\ar@[gray]@/_/[dddd]_{}
			\ar@[gray]@(lu,ru)^{}
			& & & & \bullet
			\ar@[gray]@/_/[ddddllll]_{}
			\ar@[gray]@/_/[llll]_{}
			\ar@[gray]@/_/[dddd]_{}
			\ar@[gray]@(lu,ru)^{}
			& \cdots
			\\ \\ 
			\\ \\
			% Zeile 2:
			\bullet
			\ar@[gray]@/_/[rrrrdddd]|(.6){}
			\ar@[gray]@/_/[rrrruuuu]|(.6){}
			\ar@[gray]@/_/[rrrr]_{}
			\ar@[gray]@/_/[uuuu]_{}
			\ar@[gray]@/_/[dddd]_{}
			\ar@[gray]@(ld,lu)^{}
		%	\ar@`{+(9,4),+(-2,9)}_(.56){}
			& & & & \bullet
			\ar@/_/[rrrrdddd]|{\mu_2}
			\ar@/_/[uuuullll]|{\lambda_0}
			\ar@/_/[llll]|{-\overline{\eta_0}}
			\ar@/_/[rrrr]|{\eta_1}
			\ar@/_/[uuuu]|{-\eta_0}
			\ar@/_/[dddd]|{\overline{\eta_1}}
			\ar@`{+(9,4),+(-2,9)}|(.56){\sigma_1}
			& & & & \bullet
			\ar@[gray]@/_/[rrrrdddd]|{}
			\ar@[gray]@/_/[uuuullll]_{}
			\ar@[gray]@/_/[uuuurrrr]_{}
			\ar@[gray]@/_/[lllldddd]|(.6){}
			\ar@[gray]@/_/[llll]_{}
			\ar@[gray]@/_/[rrrr]_{}
			\ar@[gray]@/_/[uuuu]_{}
			\ar@[gray]@/_/[dddd]_{}
			\ar@[gray]@`{+(-9,-4),+(2,-9)}_(.56){}
			& & & & \bullet
			\ar@[gray]@/_/[uuuullll]_{}
			\ar@[gray]@/_/[lllldddd]|(.6){}
			\ar@[gray]@/_/[llll]_{}
			\ar@[gray]@/_/[uuuu]_{}
			\ar@[gray]@/_/[dddd]_{}
			\ar@[gray]@`{+(9,4),+(-2,9)}_(.56){}
			& \cdots
			\\ \\ 
			\\ \\
			% Zeile 3:
			\bullet
			\ar@[gray]@/_/[rrrrdddd]|(.6){}
			\ar@[gray]@/_/[rrrruuuu]|(.6){}
			\ar@[gray]@/_/[rrrr]_{}
			\ar@[gray]@/_/[uuuu]_{}
			\ar@[gray]@/_/[dddd]_{}
			\ar@[gray]@(ld,lu)^{}
		%	\ar@[gray]@`{+(9,4),+(-2,9)}_(.56){}
			& & & & \bullet
			\ar@[gray]@/_/[rrrrdddd]_{}
			\ar@[gray]@/_/[uuuullll]_{}
			\ar@[gray]@/_/[ddddllll]_{}
			\ar@[gray]@/_/[rrrruuuu]|(.6){}
			\ar@[gray]@/_/[llll]_{}
			\ar@[gray]@/_/[rrrr]_{}
			\ar@[gray]@/_/[uuuu]_{}
			\ar@[gray]@/_/[dddd]_{}
			\ar@[gray]@`{+(9,4),+(-2,9)}_(.56){}
			& & & & \bullet
			\ar@/_/[rrrrdddd]|{\mu_3}
			\ar@/_/[uuuullll]|{\lambda_1}
			\ar@/_/[llll]|{-\overline{\eta_1}}
			\ar@/_/[rrrr]|{\eta_2}
			\ar@/_/[uuuu]|{-\eta_1}
			\ar@/_/[dddd]|{\overline{\eta_2}}
			\ar@`{+(9,4),+(-2,9)}|(.56){\sigma_2}
			& & & & \bullet
			\ar@[gray]@/_/[uuuullll]_{}
			\ar@[gray]@/_/[lllldddd]|(.6){}
			\ar@[gray]@/_/[llll]_{}
			\ar@[gray]@/_/[uuuu]_{}
			\ar@[gray]@/_/[dddd]_{}
			\ar@[gray]@`{+(9,4),+(-2,9)}_(.56){}
			& \cdots
			\\ \\ \\ \\
			% Zeile 4:
			**[d] \underset{\underset{\vdots}{ }}{\bullet} % mit Zeile 4 nur: \bullet
			\ar@[gray]@/_/[rrrruuuu]^{}
			\ar@[gray]@/_/[rrrr]_{}
			\ar@[gray]@/_/[uuuu]_{}
			\ar@[gray]@(ld,lu)^{}
		%	\ar@[gray]@`{+(9,4),+(-2,9)}_(.56){}
			& & & & **[d] \underset{\underset{\vdots}{ }}{\bullet}  % mit Zeile 4 nur: \bullet
			\ar@[gray]@/_/[uuuullll]_{}
			\ar@[gray]@/_/[rrrruuuu]|(.6){}
			\ar@[gray]@/_/[llll]_{}
			\ar@[gray]@/_/[rrrr]_{}
			\ar@[gray]@/_/[uuuu]_{}
			\ar@[gray]@`{+(9,4),+(-2,9)}_(.56){}
			& & & & **[d] \underset{\underset{\vdots}{ }}{\bullet}  % mit Zeile 4 nur: \bullet
			\ar@[gray]@/_/[uuuullll]_{}
			\ar@[gray]@/_/[llll]_{}
			\ar@[gray]@/_/[rrrr]_{}
			\ar@[gray]@/_/[uuuu]_{}
			\ar@[gray]@`{+(9,4),+(-2,9)}_(.56){}
			& & & & **[d] \underset{\underset{\vdots}{ }}{\bullet}  % mit Zeile 4 nur: \bullet
			\ar@/_/[uuuullll]|{\lambda_2}
			\ar@/_/[llll]|{-\overline{\eta_2}}
			\ar@/_/[uuuu]|{-\eta_2}
			\ar@`{+(9,4),+(-2,9)}|(.56){\sigma_3}
			& **[d] \underset{\underset{\ddots}{ }}{\cdots}  % mit Zeile 4 nur: \hspace{-1em}\cdots
		}
	\end{xy}
	\]
	\caption{Transition rates for a QBDC. 
	An arrow from the point $(m,m')$ to $(n,n')$ on the lattice 
	$\N_0 \times \N_0$ indicates that the transition rate 
	$\Tr(e_{n',n}T(e_{m,m'}))$ for a QBDC $T$ not necessarily vanishes.}
	\label{fig:transition-rates}
\end{figure}
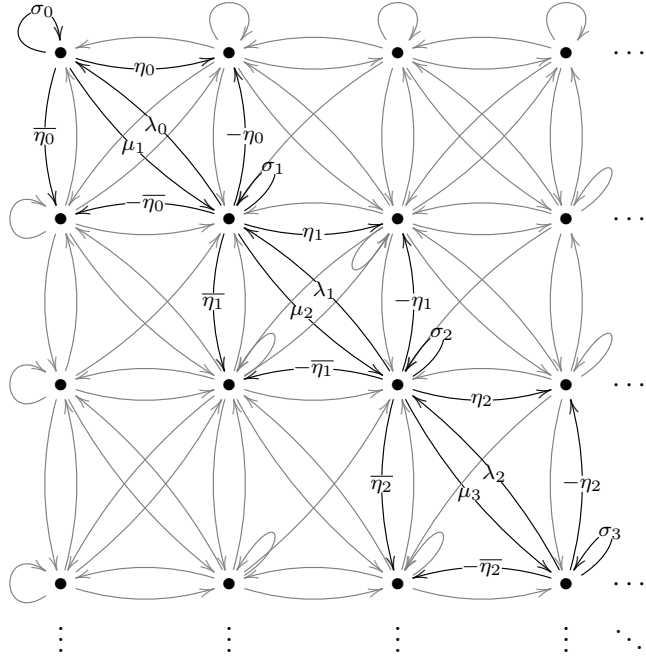

\subsection{Notational preliminaries}

The set of natural numbers is taken as $\N := \{1,2,3,\ldots\}$. 
Scalar products $\skapro{.,.}$ are linear in the first slot 
and anti-linear in the second. 
If $\H$ is a Hilbert space, then 
$\B(\H)$ denotes the algebra of bounded operators on $\H$. 
A unital completely positive linear map $T: \BH \to \BH$ 
will be referred to as a \emph{ucp-map}. 
The canonical basis of $\ell^2(\N_0)$ is denoted 
by $(e_n)_{n\in\N_0}$. 
For $n,m \in \N_0$, the \emph{matrix unit} 
$e_k \mapsto \delta_{m,k} e_n$ is denoted by $e_{n,m}$, 
and the orthogonal projection onto 
$\spann\{e_k \,|\, n \leq k \leq m\} \subset \ell^2(\N_0)$ 
by $p_{[n,m]}$. 

\begin{defn}
A \emph{quantum birth and death chain (on the non-negative integers; QBDC)} 
is a ucp-map $T: \Bl \to \Bl$ satisfying 
\begin{align}\label{eq:qbdc-def}
p_{[n+1,m-1]} \leq T(p_{[n,m]}) \leq p_{[n-1,m+1]}
\end{align}
for all $n,m \in \N_0$ with $n\leq m$. Here, $p_{[-1,m+1]}$ is 
understood to be $p_{[0,m+1]}$. 
\end{defn}

For a quantum birth and death chain $T$ 
we introduce the \emph{transition rates} 
\begin{gather}
\sigma_n := \Tr(e_{n,n}T(e_{n,n})) \ , \qquad
\mu_n := \Tr(e_{n,n}T(e_{n-1,n-1})) \ , \nonumber\\
\lambda_n := \Tr(e_{n,n}T(e_{n+1,n+1})) \ , \qquad
\eta_n := \Tr(e_{n+1,n}T(e_{n,n})) \ . \label{eq:parameters-general}
\end{gather}
The transition rates $\Tr(e_{n',n}T(e_{m,m'}))$ vanish due 
to \eqref{eq:qbdc-def} whenever $|n-m|>1$ or $|n'-m'|>1$. 
Unitality of $T$ implies that 
$\Tr(e_{n,n}T(e_{n-1,n+1})) = 0 = \Tr(e_{n,n}T(e_{n+1,n-1}))$, 
as well as the relations 
$1 = \sigma_n + \lambda_n + \mu_n$ and 
$\eta_n = -\Tr(e_{n+1,n}T(e_{n+1,n+1}))$; 
positivity gives 
$\overline{\eta_n} = \Tr(e_{n,n+1}T(e_{n,n})) 
	= -\Tr(e_{n+1,n}T(e_{n+1,n+1}))$. 
If we draw an arrow from $(m,m')$ to $(n,n')$ on the lattice 
$\N_0 \times \N_0$ and label it with the transition rate 
$\Tr(e_{n',n}T(e_{m,m'}))$ whenever it does not vanish, then 
a generic QBDC $T$ may be visualised by a diagram as in 
Figure \ref{fig:transition-rates}. 

For an unbounded operator $X$ with domain 
$\Dom(X) = D := \spann\{e_k\,|\, k\in\N_0\}$ 
let $X_{n,m} := \langle Xe_m,e_n \rangle$. 
If $T$ is a QBDC, then 
\begin{align}\label{eq:TX-for-X-unbounded}
T(X) := \text{strong-lim}_{N\in\N_0} \sum_{0\leq n,m\leq N} X_{n,m}\cdot T(e_{n,m})
\end{align}
defines another operator with domain $D$.

\subsection{Existence of normal invariant states}

We investigate the existence of normal invariant states 
using similar techniques as applied in \cite{FaR01} to the case of a 
continuous-time-semigroup with a form-generator. The case of a semigroup 
in discrete time as considered here, i.e. of powers of 
a single ucp-map, is much less technically involved, of course. 

\begin{lemma}[{cf.\ \cite[Prop.\,II.1]{FaR01}}]
\label{lem:clusters-are-inv}
	Let $\varphi$ be a normal state on a Hilbert space $\H$ and $T$ a 
	ucp-map on $\B(\H)$. 
	Then all weak cluster points of the sequence 
	\begin{align}
		\frac{1}{n} \sum_{k=1}^n \varphi \circ T^k
	\end{align}
	in the predual of $\B(\H)$ are normal invariant states for $T$.
\end{lemma}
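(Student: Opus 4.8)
The plan is to verify three things about an arbitrary weak cluster point $\omega$ of the sequence $\psi_n := \frac1n\sum_{k=1}^n \varphi\circ T^k$ in the predual $\BH_*$: that $\omega$ is normal, that it is a state, and that it is $T$-invariant. Normality will cost nothing: by hypothesis $\omega$ is a cluster point \emph{in the predual}, equipped with the weak topology $\sigma(\BH_*,\BH)$, so $\omega$ is represented by a trace-class operator and is automatically a normal functional. This is exactly why one takes cluster points in $\BH_*$ rather than weak-$*$ cluster points in $\BH^*$: the latter exist by Banach--Alaoglu but may well be singular.

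For the state property I would first note that each $\psi_n$ is itself a state, being a convex combination of the states $\varphi\circ T^k$ (each $T^k$ is ucp and $\varphi$ is a state). Positivity and normalisation then survive the weak limit. Fixing a subnet with $\psi_{n_\alpha}\to\omega$, for $a\ge 0$ in $\BH$ one has $\psi_{n_\alpha}(a)\ge 0$, whence $\omega(a)=\lim_\alpha\psi_{n_\alpha}(a)\ge 0$; and since $T$ is unital, $\psi_n(\one)=\frac1n\sum_{k=1}^n\varphi(T^k(\one))=1$, so $\omega(\one)=1$. A positive functional with $\omega(\one)=1$ is a state.

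The invariance is the usual mean-ergodic telescoping estimate. Composing with $T$ merely shifts the summation index, so the sum collapses:
\begin{align}
\psi_n\circ T - \psi_n = \tfrac1n\bigl(\varphi\circ T^{n+1} - \varphi\circ T\bigr).
\end{align}
Since $T$ is unital and positive it is a contraction, so $\varphi\circ T^{n+1}$ and $\varphi\circ T$ each have norm at most $1$, giving $\norm{\psi_n\circ T - \psi_n}\le \tfrac2n$ in the dual norm of $\BH^*$. Hence for every fixed $a\in\BH$ the difference $\psi_{n_\alpha}(T(a))-\psi_{n_\alpha}(a)$ tends to $0$. Passing to the limit along the subnet, and using that $T(a)\in\BH$ is a legitimate test element for the weak convergence,
\begin{align}
\omega(T(a)) = \lim_\alpha \psi_{n_\alpha}(T(a)) = \lim_\alpha \psi_{n_\alpha}(a) = \omega(a),
\end{align}
so $\omega\circ T=\omega$. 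This argument uses no normality of $T$: the telescoping lives entirely in $\BH^*$, and $\omega\circ T$ enters only through the pairing against $T(a)\in\BH$.

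The computation is routine, and the only point requiring genuine care is conceptual rather than technical: that working in the predual is precisely what upgrades an ordinary invariant state to a \emph{normal} one. I expect no analytic obstacle within the lemma itself; the delicate question of whether such cluster points exist at all (that is, relative weak compactness of the $\psi_n$ in $\BH_*$) is a separate tightness issue, not addressed here, and is supplied where the lemma is applied.
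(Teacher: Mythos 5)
Your proof is correct. The paper itself does not prove this lemma but defers to \cite[Prop.\,II.1]{FaR01}; your argument---normality coming for free from membership in the predual, positivity and normalisation surviving the weak limit, and the telescoping estimate $\norm{\psi_n\circ T-\psi_n}\le \tfrac{2}{n}$ passed along a subnet---is precisely the standard discrete-time version of that cited argument, so there is nothing substantive to compare.
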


\begin{defn}
	A sequence $(\varphi_n)_{n \in \N_0}$ of normal states 
	on $\B(\H)$ is called \emph{tight}, if for each $\varepsilon > 0$ there 
	exist a finite rank projection $p \in \B(\H)$ and $N \in \N_0$, such that
	\begin{align}
		\varphi_n(p) > 1 - \varepsilon
	\end{align}
	for all $n > N$.
\end{defn}

\begin{lemma}[{\cite[Thm.\,II.1]{FaR01}; see also \cite[Lem.\,2.2.4]{Haag06}}]
\label{lem:tight-has-cluster}
	Let $(\varphi_n)_{n \in \N_0}$ be a tight sequence of normal states 
	on $\B(\H)$. Then $(\varphi_n)_{n \in \N_0}$ possesses a weak cluster point 
	in the predual of $\B(\H)$. 
\end{lemma}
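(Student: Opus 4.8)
The plan is to identify each normal state $\varphi_n$ with its density operator $\rho_n$, a positive element of the trace-class operators $\mathcal{T}(\H)$ with $\Tr(\rho_n) = 1$, and to exploit the two roles played by $\mathcal{T}(\H)$: on the one hand it is the predual of $\B(\H)$, so the topology in question is the weak topology $\sigma(\mathcal{T}(\H), \B(\H))$; on the other hand it is the dual of the compact operators $\mathcal{K}(\H)$. By Banach--Alaoglu the closed unit ball of $\mathcal{T}(\H) = \mathcal{K}(\H)^{*}$ is compact for the weak-$*$ topology $\sigma(\mathcal{T}(\H), \mathcal{K}(\H))$. Since every $\rho_n$ lies in this ball, I would extract a subnet $(\rho_{n_\alpha})$ converging in $\sigma(\mathcal{T}(\H), \mathcal{K}(\H))$ to some $\rho$ in the ball, i.e.\ $\Tr(\rho_{n_\alpha} K) \to \Tr(\rho K)$ for every compact $K$. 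Testing against positive compact operators shows $\rho \geq 0$, and weak-$*$ lower semicontinuity of the trace norm gives $\Tr(\rho) \leq 1$. Working with subnets rather than subsequences avoids any separability assumption on $\H$.

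The first use of tightness would be to rule out loss of mass, i.e.\ to show $\Tr(\rho) = 1$. Given $\varepsilon > 0$, tightness provides a finite-rank projection $p$ and an index $N$ with $\varphi_n(p) > 1 - \varepsilon$ for $n > N$. Since $p$ is finite-rank, hence compact, and the subnet is eventually beyond $N$, I obtain $\Tr(\rho p) = \lim_\alpha \Tr(\rho_{n_\alpha} p) \geq 1 - \varepsilon$, so $\Tr(\rho) \geq 1 - \varepsilon$; letting $\varepsilon \to 0$ gives $\Tr(\rho) = 1$. Thus $\rho$ is a density operator defining a normal state $\varphi$.

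The hard part will be to upgrade convergence from $\sigma(\mathcal{T}(\H), \mathcal{K}(\H))$ to the full weak topology $\sigma(\mathcal{T}(\H), \B(\H))$, namely to show $\Tr(\rho_{n_\alpha} X) \to \Tr(\rho X)$ for every $X \in \B(\H)$; this is the second, essential use of tightness, since it is exactly what prevents the extra weight against non-compact observables from escaping to infinity. The device I would use is the elementary trace estimate, valid for any density operator $\sigma$ and projection $p$,
\[
\abs{\Tr(\sigma X) - \Tr(\sigma\, pXp)} \leq 2\,\norm{X}\,\Tr\!\big(\sigma(\one - p)\big)^{1/2} \ ,
\]
which follows from $X - pXp = (\one - p)X + pX(\one - p)$ together with the Cauchy--Schwarz inequality for the Hilbert--Schmidt inner product after factoring $\sigma = \sqrt{\sigma}\sqrt{\sigma}$. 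Fixing $X$ and $\varepsilon > 0$, I would use tightness and, if necessary, enlarge the resulting projection (possible since the fixed trace-class $\rho$ satisfies $\Tr(\rho(\one - q)) \to 0$ as the finite-rank projection $q$ increases to $\one$) to obtain a single finite-rank $p$ and index $N$ with $\Tr(\rho_n(\one - p)) < \varepsilon$ for $n > N$ and $\Tr(\rho(\one - p)) < \varepsilon$. As $pXp$ is finite-rank, hence compact, one has $\Tr(\rho_{n_\alpha} pXp) \to \Tr(\rho\, pXp)$ along the subnet; combining this with the displayed estimate applied to both $\rho_{n_\alpha}$ (for $n_\alpha > N$) and $\rho$ yields $\Tr(\rho_{n_\alpha} X) \to \Tr(\rho X)$. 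Since this convergent subnet exhibits $\varphi$ as a cluster point, the proof is complete.
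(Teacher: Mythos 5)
Your proof is correct, and it takes the approach that the paper implicitly relies on: the paper gives no proof of this lemma, deferring instead to \cite[Thm.\,II.1]{FaR01} and \cite[Lem.\,2.2.4]{Haag06}, and your argument --- Banach--Alaoglu on the unit ball of the trace class viewed as the dual of the compact operators, with tightness used once to rule out loss of mass and once (via the Cauchy--Schwarz truncation estimate) to upgrade weak-$*$ convergence to convergence in the weak topology $\sigma(\B(\H)_*,\B(\H))$ --- is precisely the standard argument of those references. The details you supply (the subnet being eventually beyond $N$, enlarging the tightness projection so that it also works for the limit $\rho$, and the factor-$2$ estimate) are all sound, so your write-up in fact fills in what the paper leaves to the literature.
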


Given a self-adjoint operator $Y$ and a measurable subset $A \subseteq \R$, 
denote $Y$'s spectral projection corresponding to $A$ by $\chi_A(Y)$. 
For a one-parameter-semigroup version of the 
following lemma -- without the fall-off statement, 
see \cite[Thm.\,II.1]{FaR01}.

\begin{lemma}\label{lem:far_ex_inv}
	Let $T$ be a QBDC. 
	Let $X,Y$ be self-adjoint operators on $\ell^2(\N_0)$ with 
	$D \subseteq \Dom(X)$, $\Dom(Y)$, 
	and assume that $X$ is positive, $Y$ is bounded from below 
	by $-b$ ($b>0$) and $Y$'s spectral projections associated to 
	bounded sets are finite dimensional. 
	If
	\begin{align}\label{eq:lem-far-ex-inv}
		\sum_{k=1}^n \langle T^k(Y) \xi,\xi \rangle \leq \langle X\xi, \xi \rangle
	\end{align}
	for all $n \in \N$ and $\xi \in D$, then $T$ has a normal invariant state 
	$\varphi$ such that 
	$\varphi(\chi_{[c,d]}(Y)) \leq \frac{b}{c}$ 
	holds for all $0 < c \leq d$.
\end{lemma}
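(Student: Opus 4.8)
The plan is to realise the invariant state as a weak cluster point of Cesàro means of a fixed vector state, and then to extract the fall-off estimate from the very same averages by feeding the hypothesis \eqref{eq:lem-far-ex-inv} through the spectral calculus of $Y$.

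First I would fix a unit vector $\xi_0 \in D$, set $\varphi_0 := \langle\,\cdot\,\xi_0,\xi_0\rangle$ and $C := \langle X\xi_0,\xi_0\rangle \geq 0$, and consider the normal states $\Phi_n := \frac{1}{n}\sum_{k=1}^n \varphi_0\circ T^k$. The hypothesis \eqref{eq:lem-far-ex-inv} gives $\Phi_n(Y) = \frac{1}{n}\sum_{k=1}^n \langle T^k(Y)\xi_0,\xi_0\rangle \leq C/n$. Since $Y \geq -b$, the spectral calculus yields the operator inequality $\chi_{(R,\infty)}(Y) \leq (R+b)^{-1}(Y+b)$ for every $R > 0$, whence $\Phi_n(\chi_{(R,\infty)}(Y)) \leq (R+b)^{-1}(\Phi_n(Y)+b) \leq (R+b)^{-1}(C+b)$ uniformly in $n$. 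As the projections $\chi_{[-b,R]}(Y) = 1 - \chi_{(R,\infty)}(Y)$ are finite-dimensional by assumption, choosing $R$ so large that $(C+b)/(R+b) < \varepsilon$ shows that $(\Phi_n)_n$ is tight. Lemma \ref{lem:tight-has-cluster} then furnishes a weak cluster point $\varphi$ in the predual, and Lemma \ref{lem:clusters-are-inv} identifies $\varphi$ as a normal invariant state for $T$.

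It remains to establish the fall-off $\varphi(\chi_{[c,d]}(Y)) \leq b/c$. I would first reduce this to the single inequality $\varphi(Y) \leq 0$: using $Y \geq -b$ and $[c,d]\subseteq[c,\infty)$, spectral calculus gives $\chi_{[c,d]}(Y) \leq \chi_{[c,\infty)}(Y) \leq (c+b)^{-1}(Y+b)$, so once $\varphi(Y) \leq 0$ is known, $\varphi(\chi_{[c,d]}(Y)) \leq (c+b)^{-1}(\varphi(Y)+b) \leq b/(c+b) \leq b/c$. To prove $\varphi(Y) \leq 0$ I would work with the bounded finite-rank truncations $Y^{(R)} := Y\chi_{[-b,R]}(Y)$, which increase to $Y$ as $R\to\infty$ and satisfy $Y^{(R)} \leq Y$. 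For each fixed $R$ the operator $Y - Y^{(R)}$ is positive, and approximating it by its bounded truncations, together with positivity of $T^k$ and the strong-limit definition \eqref{eq:TX-for-X-unbounded}, gives $\langle T^k(Y^{(R)})\xi_0,\xi_0\rangle \leq \langle T^k(Y)\xi_0,\xi_0\rangle$. Summing over $k$ and invoking \eqref{eq:lem-far-ex-inv} yields $\Phi_n(Y^{(R)}) \leq C/n$; passing to the cluster point $\varphi$ along indices tending to infinity gives $\varphi(Y^{(R)}) \leq 0$ for every $R$, and normality of $\varphi$ (monotone convergence) upgrades this to $\varphi(Y) = \lim_{R\to\infty}\varphi(Y^{(R)}) \leq 0$.

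The spectral-calculus bookkeeping and the tightness estimate are routine. The step demanding care, and the main obstacle to full rigour, is the order relation $T^k(Y^{(R)}) \leq T^k(Y)$ for the unbounded $Y$, i.e.\ the assertion that $T$ sends the positive unbounded operator $Y - Y^{(R)}$ to a positive form on $D$; this is precisely where the approximation of $Y - Y^{(R)}$ by its positive finite-rank truncations must be combined with positivity of $T^k$ and with the definition \eqref{eq:TX-for-X-unbounded}. Once this is secured, everything else reduces to the two cited lemmas and elementary estimates.
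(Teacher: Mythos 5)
Your proposal is correct and follows essentially the same route as the paper's proof: Ces\`aro means of a vector state, tightness extracted from the hypothesis via a spectral inequality for $Y$, the two cluster-point lemmas, and a spectral estimate for the fall-off. The step you single out as the main obstacle --- that $T^k$ sends operators which are positive as forms on $D$ to positive forms on $D$ --- is precisely the opening observation of the paper's proof (obtained there from the locality property \eqref{eq:qbdc-def}), and your sketched truncation argument, namely $\langle T(A)\xi,\xi\rangle = \lim_N \langle T(p_{[0,N]}Ap_{[0,N]})\xi,\xi\rangle \geq 0$ by \eqref{eq:TX-for-X-unbounded} and positivity of $T$ on the finite-rank compressions, does close it.
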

\begin{proof}
	Observe that for any (unbounded) operator $A$ with 
	domain $D \subseteq \Dom(A)$ which satisfies $\langle A \xi, \xi \rangle \geq 0$ 
	for all $\xi \in D$, one has $\langle T(A) \xi, \xi \rangle \geq 0$ 
	for all $\xi \in D$: 
	Namely, with $M\geq 0$ large enough such that $\xi = p_{[0,M]} \xi$,  
	\begin{align} 
		\langle T(A) \xi, \xi \rangle 
		= \langle p_{[0,M]} T(A) p_{[0,M]} \xi, \xi \rangle 
		\stackrel{\eqref{eq:qbdc-def} \& \eqref{eq:TX-for-X-unbounded}}= 
			\langle p_{[0,M]} T(p_{[0,M+1]} A p_{[0,M+1]}) p_{[0,M]} \xi, \xi \rangle 
		\geq 0 \ .
	\end{align} 
	Iterating this, one obtains $\langle T^n(A) \xi, \xi \rangle \geq 0$ 
	for all $\xi \in D$ and all $n \in \N$. 
	
	Now we have for each $r > 0$ 
	\begin{align}
		Y \geq -b \chi_{(-\infty,r]}(Y) + r \chi_{(r,\infty)}(Y) = -(b + r) \chi_{(-\infty,r]}(Y) + r \one \ .
	\end{align}
	So, 
	\begin{align}
		-(b + r) \sum_{k=1}^n \langle T^k(\chi_{(-\infty,r]}(Y)) \xi,\xi \rangle + nr \|\xi\|^2 &\leq \langle X\xi, \xi \rangle \ . 
	\end{align}
	Bringing $nr \|\xi\|^2$ to the other side and dividing by 
	$-(b + r)n$ gives  
	\begin{align}
		\frac1n \sum_{k=1}^n \langle T^k(\chi_{(-\infty,r]}(Y)) \xi,\xi \rangle &\geq \frac{r \|\xi\|^2}{b + r} - \frac{\langle X\xi, \xi \rangle}{n(b + r)} \ .
	\end{align}
	Choosing $\xi$ with $\|\xi\| = 1$, 
	the last line says that the sequence of states $\frac1n \sum_{k=1}^n T_*^k(|\xi \rangle\langle \xi|)$ 
	is tight, since $\chi_{(-\infty,r]}(Y) = \chi_{[-b,r]}(Y)$ 
	is a finite rank projection 
	(here, $T_*$ denotes the predual of the map $T: \Bl \to \Bl$).
	
	Let $\varphi$ be a weak cluster point of that sequence 
	according to Lemma \ref{lem:tight-has-cluster}. 
	By Lemma \ref{lem:clusters-are-inv}, $\varphi$ is a normal 
	invariant state. 
	From $Y \geq -b\one + c\chi_{[c,d]}(Y)$ and 
	\eqref{eq:lem-far-ex-inv} we get 
	\begin{align}
	-nb + c\sum_{k=1}^n \skapro{T^k(\chi_{[c,d]}(Y))\xi,\,\xi} \leq \skapro{X\xi,\,\xi} \ ,
	\end{align}
	which shows the estimate $\varphi(\chi_{[c,d]}(Y)) \leq \frac{b}{c}$.
\end{proof}

See \cite[Thm.\,IV.1]{FaR01} for a one-parameter-semigroup version of the following proposition.

\begin{prop}\label{prop:far_ex_inv}
	Let $T$ be a QBDC. 
	Let $X,Y$ be self-adjoint (unbounded) operators on $\ltwo$ such that $D \subseteq \Dom(X), \Dom(Y)$, $X$ is positive and 
	$Y$ is bounded from below. Assume that $Y$'s spectral projections associated to 
	bounded sets are finite dimensional. If
	\begin{align}
		\langle (T(X) - X)\xi,\xi\rangle \leq - \langle Y\xi, \xi \rangle
	\end{align}
	for all $\xi \in D$, then $T$ has a normal invariant state
	$\varphi$ such that 
	$\varphi(\chi_{[c,d]}(Y)) \leq \frac{b}{c}$ 
	holds for all $0 < c \leq d$ and some $b>0$.
\end{prop}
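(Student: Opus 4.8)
The plan is to deduce the statement from Lemma \ref{lem:far_ex_inv} by converting the one-step form inequality $T(X) - X \leq -Y$ into the telescoped hypothesis $\sum_{k=1}^n T^k(Y) \leq X'$ required there, for a suitable positive operator $X'$. The discrete summation below is the exact analogue of integrating $\tfrac{d}{dt}T_t(X) \leq -T_t(Y)$ in the semigroup proof of \cite[Thm.\,IV.1]{FaR01}.

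First I would record the monotonicity property established at the beginning of the proof of Lemma \ref{lem:far_ex_inv}: if $A$ is any operator with $D \subseteq \Dom(A)$ and $\langle A\xi,\xi\rangle \geq 0$ for all $\xi \in D$, then $\langle T^j(A)\xi,\xi\rangle \geq 0$ for all $\xi \in D$ and all $j \in \N_0$. Applying this to $A := X - T(X) - Y$, which is $\geq 0$ as a form on $D$ by hypothesis, and using linearity of $T$ together with the consistency $T^j(T(X)) = T^{j+1}(X)$ of the calculus \eqref{eq:TX-for-X-unbounded}, I obtain $\langle T^j(X)\xi,\xi\rangle - \langle T^{j+1}(X)\xi,\xi\rangle - \langle T^j(Y)\xi,\xi\rangle \geq 0$ for all $\xi \in D$. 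Summing over $j = 0,\dots,n-1$, the $X$-terms telescope and leave $\langle X\xi,\xi\rangle - \langle T^n(X)\xi,\xi\rangle - \sum_{j=0}^{n-1}\langle T^j(Y)\xi,\xi\rangle \geq 0$; since $X \geq 0$ implies $T^n(X) \geq 0$ on $D$ by the same monotonicity, this yields $\sum_{j=0}^{n-1}\langle T^j(Y)\xi,\xi\rangle \leq \langle X\xi,\xi\rangle$.

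To match the index range of Lemma \ref{lem:far_ex_inv} exactly, I would apply monotonicity once more to the operator $X - \sum_{j=0}^{n-1}T^j(Y) \geq 0$, obtaining $\sum_{k=1}^n T^k(Y) \leq T(X)$ as forms on $D$. Fixing $b>0$ with $Y \geq -b\one$ (possible since $Y$ is bounded below), the hypothesis gives $T(X) \leq X - Y \leq X + b\one$, so that $\sum_{k=1}^n \langle T^k(Y)\xi,\xi\rangle \leq \langle(X + b\one)\xi,\xi\rangle$ for every $n \in \N$ and $\xi \in D$. The operator $X' := X + b\one$ is positive with $D \subseteq \Dom(X')$, while $Y$ still satisfies all the standing assumptions (bounded below by $-b$, finite-dimensional spectral projections on bounded sets). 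Invoking Lemma \ref{lem:far_ex_inv} with $X'$ in place of $X$ then produces a normal invariant state $\varphi$ with $\varphi(\chi_{[c,d]}(Y)) \leq b/c$ for all $0 < c \leq d$, which is precisely the claim.

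I expect the genuine difficulty to lie not in the telescoping algebra but in the unbounded-operator bookkeeping underlying it: one must check that $T^j(X)$, $T^j(Y)$ and $T^{j+1}(X)$ are all well-defined operators on $D$ via \eqref{eq:TX-for-X-unbounded}, that $T$ acts linearly on the finite combinations appearing above, and in particular that $T^j(T(X)) = T^{j+1}(X)$ holds as quadratic forms on $D$, so that the sum really collapses. Once these identities are verified for the finite truncations entering \eqref{eq:TX-for-X-unbounded}, the remaining steps and the final appeal to Lemma \ref{lem:far_ex_inv} are routine.
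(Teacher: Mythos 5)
Your proof is correct and follows essentially the same route as the paper: the paper telescopes $\langle X\xi,\xi\rangle \geq \langle (X - T^{n+1}(X))\xi,\xi\rangle = \sum_{k=0}^n \langle T^k(X-T(X))\xi,\xi\rangle \geq \sum_{k=0}^n \langle T^k(Y)\xi,\xi\rangle$, using the same form-positivity preservation established at the start of the proof of Lemma \ref{lem:far_ex_inv}, and then invokes that lemma. Your additional care with the index shift (passing to $\sum_{k=1}^n T^k(Y)$ and replacing $X$ by $X + b\one$) makes explicit a bookkeeping step the paper leaves implicit; it changes nothing essential.
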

\begin{proof}
	We have, for all $\xi \in D$, 
	\begin{align}
	\skapro{X\xi,\,\xi} \geq \skapro{(X - T^{n+1}(X))\xi,\,\xi} 
	= \sum_{k=0}^n \skapro{T^k(X - T(X))\xi,\,\xi} 
	\geq \sum_{k=0}^n \skapro{T^k(Y)\xi,\,\xi} \ .
	\end{align}
	Now the claim follows on using 
	Lemma \ref{lem:far_ex_inv}.
\end{proof}

We come to the main result.
If $\mu_k \neq 0$ for all $k$, then put 
$\pi_n := \frac{\lambda_0\lambda_1\cdots\lambda_{n-1}}{\mu_1\mu_2\cdots\mu_{n}}$. 
A normal state $\varphi$ on $\Bl$ is said to be of \emph{exponential fall-off} 
if there are constants $C,\gamma>0$ such that 
$\varphi(e_{n,n}) \leq Ce^{-\gamma n}$ for all $n\in\N_0$. 

\begin{thm}\label{thm:main-result}
	Let $T$ be a QBDC and let $\mu_n,\lambda_n$ be as in 
	\eqref{eq:parameters-general}. 
	\begin{enumerate}
	\item Suppose that $\lambda_k,\mu_k\neq 0$ for all $k$ and 
	that $\kappa := {\lim\inf}_n \frac 1n \ln(\pi_n^{-1}) > 0$. 
	If 
\begin{align}\label{eq:main-ineq}
	\lim\inf_n \frac{\lambda_n\mu_{n+1}}{4|\eta_n|^2} 
		> \frac{e^{-\kappa}}{(1 - e^{-\kappa})^2} \ ,
\end{align}
	then $T$ has a normal invariant state of exponential fall-off. 
	\item Suppose that $\lambda_n \neq 0$ for all $n \in \N$ 
	and that there is $m \in \N$ such that $\mu_n < \lambda_n$ for all $n > m$. 
	If 
\begin{align}\label{eq:main-ineq-neg}
	\lim\inf_n \frac{(\lambda_n-\mu_n)(\lambda_{n+1}-\mu_{n+1})}{4|\eta_n|^2} 
		> 1 \ , 
\end{align}
	then $T$ has no normal invariant state. 
	\end{enumerate}
\end{thm}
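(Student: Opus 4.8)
The plan is to obtain part~(1) from Proposition~\ref{prop:far_ex_inv} and part~(2) by a dual, \enquote{superharmonic} argument; in both cases the required operator inequality is reduced to the positivity of an explicit tridiagonal (Jacobi) operator. I begin with two reductions common to both parts. Conjugating $T$ by a diagonal unitary $U=\diag(u_n)$ with $|u_n|=1$ leaves $\lambda_n,\mu_n,|\eta_n|$ and the existence of normal invariant states unchanged but multiplies $\eta_n$ by $\overline{u_n}u_{n+1}$; telescoping the phases I may therefore assume $\eta_n\ge 0$. Moreover, for diagonal $X=\diag(x_n)$ with $x_n\in\R$ the operator $X-T(X)$ is \emph{tridiagonal}: by the support condition only $k=n+1$ can contribute to the $(n,n+2)$-entry of $\sum_k x_k T(e_{k,k})$, and unitality $\sum_k T(e_{k,k})=\one$ forces the corner $\Tr(e_{n+2,n}T(e_{n+1,n+1}))$ to vanish. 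Using $\sigma_n=1-\lambda_n-\mu_n$ one then computes $[X-T(X)]_{n,n}=\mu_n(x_n-x_{n-1})-\lambda_n(x_{n+1}-x_n)=:d_n$ and $[X-T(X)]_{n,n+1}=\eta_n(x_{n+1}-x_n)=:b_n$. The key tool is the elementary criterion that such a Jacobi operator is positive as soon as its diagonal splits as $d_n=p_n+q_n$ with $p_n,q_n\ge 0$ and $q_np_{n+1}\ge b_n^2$ for all $n$ (decomposition into positive $2\times2$ blocks).

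For part~(1) I take $X$ strictly increasing, defined through its increments $\delta_n:=x_{n+1}-x_n=\delta_0\prod_{k=1}^n(\rho\,\mu_k/\lambda_k)$, where $\rho\in(e^{-\kappa},1)$ is fixed close to $e^{-\kappa}$ (possible since $\kappa>0$). This choice gives $d_n=(1-\rho)\mu_n\delta_{n-1}>0$ for \emph{every} $n$ (thereby sidestepping that $\mu_n>\lambda_n$ need not hold pointwise), while $\delta_n\to\infty$ exponentially fast because $\rho>e^{-\kappa}$ and $\kappa=\lim\inf_n\frac1n\ln\pi_n^{-1}$. Putting $Y:=\diag(y_n)$ with $y_n=\epsilon d_n$ ($\epsilon>0$ small) and using the equal split $p_n=q_n$, positivity of $X-T(X)-Y$ reduces to $\frac{\lambda_n\mu_{n+1}}{4\eta_n^2}\ge\frac{\rho}{(1-\rho)^2(1-\epsilon)^2}$; as $\rho\downarrow e^{-\kappa}$ and $\epsilon\downarrow 0$ the right-hand side tends to $\frac{e^{-\kappa}}{(1-e^{-\kappa})^2}$, so \eqref{eq:main-ineq} secures it for all large $n$. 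After patching the finitely many remaining indices by setting $y_n$ equal to a large negative constant there (which only enlarges the diagonal of $X-T(X)-Y$ and makes the finite corner manifestly positive), Proposition~\ref{prop:far_ex_inv} yields a normal invariant state $\varphi$ with $\varphi(\chi_{[c,d]}(Y))\le b/c$. Since $e_{n,n}\le\chi_{[y_n,\infty)}(Y)$, this gives $\varphi(e_{n,n})\le b/y_n\le Ce^{-\gamma n}$, i.e.\ exponential fall-off.

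For part~(2) I argue by contradiction: suppose $\varphi$, with density $\rho$, is a normal invariant state, and produce a \emph{bounded} non-constant $h=\diag(h_n)$ with $h-T(h)\ge 0$ and trivial kernel. Mirroring the above, I take $h_n$ strictly decreasing with decrements $\gamma_n:=h_n-h_{n+1}>0$ satisfying $\gamma_n/\gamma_{n-1}=1-a_n$, $a_n=\theta(\lambda_n-\mu_n)/\lambda_n$ for small $\theta>0$ (legitimate since $\lambda_n>\mu_n$ for $n>m$), which yields $[h-T(h)]_{n,n}=(1-\theta)(\lambda_n-\mu_n)\gamma_{n-1}>0$ pointwise. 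The equal-split block criterion then reduces positivity of $h-T(h)$ to $\frac{(\lambda_n-\mu_n)(\lambda_{n+1}-\mu_{n+1})}{4\eta_n^2}\ge\frac{1-a_n}{(1-\theta)^2}$, which by \eqref{eq:main-ineq-neg} holds for all large $n$ as $\theta\downarrow 0$. Choosing the blocks \emph{strictly} positive definite makes $h-T(h)$ injective. Invariance gives $\Tr(\rho\,(h-T(h)))=\varphi(h)-\varphi(T(h))=0$; since $h-T(h)\ge 0$ and $\rho\ge 0$ this forces $(h-T(h))^{1/2}\rho^{1/2}=0$, hence $\mathrm{ran}\,\rho^{1/2}\subseteq\ker(h-T(h))=\{0\}$ and $\rho=0$, contradicting $\Tr\rho=1$.

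The main obstacle in both parts is to realise the \emph{sharp} constants through the $2\times2$ split while only the liminf data $\kappa$ and \eqref{eq:main-ineq}--\eqref{eq:main-ineq-neg} are available: it is precisely the adaptive increments $\delta_n$ (resp.\ decrements $\gamma_n$) that convert these averaged hypotheses into pointwise positivity of the diagonal and of the individual blocks for large $n$. The remaining care concerns the finitely many boundary indices, where the blocks must still be made positive definite, and, in part~(2), keeping $h$ bounded: the construction above does so exactly when the drift $\sum_n(\lambda_n-\mu_n)$ diverges, and the complementary \enquote{weak-drift} regime (where $\pi_n\not\to 0$) must be disposed of by a separate, easier choice of bounded superharmonic $h$.
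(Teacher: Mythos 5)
Part 1 of your proposal is correct and is, up to reparametrisation, the paper's own proof: your increments $\delta_n=\delta_0\prod_{k\le n}(\rho\,\mu_k/\lambda_k)$ reproduce the paper's choice $x_n=\sum_{k\le n}(\mu_k\pi_k)^{-1}t^k$ (take $\rho=t$), your $Y=\epsilon\cdot\diag(d_n)$ is the paper's $y_n=r\pi_n^{-1}t^n$ with $r=\epsilon(1-t)$, your $2\times2$-block splitting is precisely the tridiagonal positivity criterion \cite[Prop.\,1]{Bar78} invoked there, and the patching of the finitely many initial indices together with the appeal to Proposition~\ref{prop:far_ex_inv} and its fall-off estimate coincide as well.

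Part 2, however, has a genuine gap. Your strategy stands or falls with the existence of a \emph{bounded} diagonal $h$ such that $h-T(h)\ge 0$ and $\ker(h-T(h))=\{0\}$. Triviality of the kernel forces every diagonal entry of $h-T(h)$ to be strictly positive (a positive semidefinite matrix with a vanishing diagonal entry annihilates that basis vector). Writing $\gamma_n=h_n-h_{n+1}$, the $(0,0)$-entry is $\lambda_0\gamma_0$, so $\gamma_0>0$, and the $(n,n)$-entry $\lambda_n\gamma_n-\mu_n\gamma_{n-1}>0$ gives inductively $\gamma_n>\gamma_0\prod_{k=1}^n(\mu_k/\lambda_k)$. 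Hence a bounded $h$ of the required kind can exist \emph{only if} $\sum_n\prod_{k\le n}(\mu_k/\lambda_k)<\infty$, and the hypotheses of part 2 do not imply this: take the classical chain with $\eta_n=0$ (so that \eqref{eq:main-ineq-neg} holds, the quotient being $+\infty$), $\lambda_n=\tfrac12$ and $\mu_n/\lambda_n=1-\tfrac1{(n+2)\ln(n+2)}$; then $\prod_{k\le n}(\mu_k/\lambda_k)\sim c/\ln n$ is not summable, so \emph{no} bounded $h$ with your properties exists, although the theorem's conclusion is still true. Thus the obstruction is structural: neither your recipe (whose boundedness claim \enquote{$h$ is bounded exactly when $\sum_n(\lambda_n-\mu_n)$ diverges} is moreover incorrect — for $\lambda_n-\mu_n=1/n$ the drift diverges, yet your $\gamma_n\sim n^{-2\theta}$ is not summable for the small $\theta$ the block criterion requires) nor any \enquote{easier} bounded superharmonic in the complementary regime can close the argument.

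This is exactly why the paper proceeds differently: it takes an \emph{unbounded} $Z$ with \emph{bounded} increments (for which the form inequality $T(Z)-Z\ge\varepsilon$ is easy and needs no summability), and replaces your trace identity $\varphi(h-T(h))=0$, unavailable for unbounded $Z$, by a quantitative truncation argument: $Z^{\wedge m}$ is a scalar plus a finite-rank operator, $\|T(Z^{\wedge m})-Z^{\wedge m}\|\le 3C$, and any normal invariant state — which by the support-propagation step ($\lambda_n\neq0$ forces $\varphi(e_{n,n})\neq0$ for all large $n$) carries definite mass at sites where $\varepsilon_n>0$ — yields the contradiction $\varphi(T(Z^{\wedge m+1})-Z^{\wedge m+1})>0$. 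A secondary point you leave unaddressed: you evaluate $\varphi(T(h))$ with $T(h)$ computed \emph{entrywise}, but $T$ is not assumed normal, so for a bounded diagonal $h$ the entrywise (tridiagonal) operator need not a priori agree with the actual image of $h$ under $T$; it does when $h$ is a scalar multiple of $\one$ plus a compact operator (as a bounded monotone $h$ would be), but this requires an argument, whereas the paper's scheme only ever applies $T$ to scalar-plus-finite-rank operators and so avoids the issue entirely.
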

\begin{proof}
\emph{Part 1:} 
	Choose $e^{-\kappa} < t < 1$ and $0 < r < 1 - t$ such that still 
	\begin{align}\label{eq:main-ineq-2}
	\frac{\lambda_n\mu_{n+1}}{4|\eta_n|^2} 
		> \frac{t}{(1 - r - t)^2}
	\end{align}
	holds for all $n > N$, with some $N \in \N$. 
	Let $X := \diag(x_0, x_1, x_2, \ldots)$ and $Y := \diag(y_0, y_1, y_2, \ldots)$, 
	where $x_n := \sum_{k=1}^n (\mu_k\pi_k)^{-1} t^k$ for $n \in \N_0$ and 
	$y_n := r\pi_n^{-1} t^n$ for $n > N$ (the $y_n$'s for $n=1,2,\ldots,N$ will 
	be chosen later); 
	the domains of $X$ and $Y$ are taken as 
	$D(X) = D(Y) = D := \text{span}\{e_n: \; n \in \N_0\} \subset \ell^2(\N_0)$. 
	Since $y_n = r\cdot(\pi_n^{-1} e^{-n\kappa})\cdot(t e^\kappa)^n \rightarrow \infty$, 
	$Y$'s spectral projections associated to bounded 
	sets are finite dimensional. 
	Both $X$ and $Y$ are densely defined and semi-bounded symmetric operators, hence, 
	they possess self-adjoint extensions. 
	We have
\begin{align}
	(T(X))_{n,n} &= \sigma_n x_n + \lambda_n x_{n+1} + \mu_n x_{n-1} 
\end{align}
	for $n\geq1$, $(T(X))_{0,0} = \lambda_0 x_{1} = \lambda_0 t$, and
\begin{gather}
	(T(X))_{n,1+n} = \eta_n (x_{n} - x_{n+1}) \ , \qquad
	(T(X))_{1+n,n} = \overline{(T(X))_{n,1+n}} \ , \nonumber\\
	(T(X))_{n,k+n} = (T_\psi(X))_{k+n,n} = 0 \ , 
\end{gather}
	for all $n\geq 0$ and $k > 1$, i.e. $T(X)$ is a tridiagonal operator.
	Using $1 = \sigma_n + \lambda_n + \mu_n$, we obtain
\begin{align}
	(T(X) - X)_{n,n} &= \lambda_n (x_{n+1} - x_n) - \mu_n (x_n - x_{n-1}) 
		= \frac{\lambda_n}{\mu_{n+1}\pi_{n+1}} t^{n+1} - \pi_n^{-1} t^n \\
		&= \pi_n^{-1} (t - 1) t^n \nonumber
\end{align}
	for all $n\geq1$. 
	We want to show that $T(X) - X \leq -Y$, i.e. $0 \leq X - T(X) - Y$, because then the 
	existence of a normal invariant state follows from Proposition \ref{prop:far_ex_inv}. 
	To this end, by \cite[Prop.\,1]{Bar78} and 
	since $X - Y - T(X)$ is tridiagonal and symmetric, it suffices to show that 
	\begin{enumerate}[1)]
	\item 
	the diagonal entries of $X - T(X) - Y$ are positive, 
	\item 
	the following expressions are positive: 
	\begin{align} 
		&(X - T(X) - Y)_{n,n}(X - T(X) - Y)_{n+1,n+1} \\
			&\quad - 4(X - T(X) - Y)_{n+1,n}(X - T(X) - Y)_{n,n+1} \ . \nonumber
	\end{align}
	\end{enumerate}
	Firstly,
\begin{align}
	(X - T(X) - Y)_{n,n} &= \pi_n^{-1} (1 - t) t^n - \pi_n^{-1} r t^n %\\
		= \pi_n^{-1} (1 - r - t) t^n > 0 %\nonumber
\end{align}
	for all $n > N$. Secondly, for all $n > N$ we have
\begin{align}
		&(X - T(X) - Y)_{n,n}(X - T(X) - Y)_{n+1,n+1} \\
			&\qquad - 4(X - T(X) - Y)_{n+1,n}(X - T(X) - Y)_{n,n+1} \nonumber\\
			&\quad = \pi_n^{-1}(1 - r - t) t^n 
				\cdot \pi_{n+1}^{-1}(1 - r - t) t^{n+1} 
			- 4|\eta_n|^2 \frac{t^{2n+2}}{\mu_{n+1}^2\pi_{n+1}^2} \nonumber\\
			&\quad = \frac{1}{\mu_{n+1}^2\pi_{n+1}^2} 
				\left(\lambda_{n}\mu_{n+1} (1 - r - t)^2 - 4|\eta_n|^2 t\right) t^{2n+1} 
			\stackrel{\eqref{eq:main-ineq-2}}> 0 \ . \nonumber
\end{align}
	
	Finally, choosing $y_n \in \R$ sufficiently small ($y_n \ll 0$) 
	for $n = 1,2,\ldots,N$, both conditions 1) and 2) can be 
	fulfilled for all $n \in \N_0$. 
	
	As $Y$ is diagonal with exponentially growing eigenvalues 
	$y_n$, the fall-off statement in Proposition 
	\ref{prop:far_ex_inv} shows 
	that $T$ has a normal invariant state of exponential fall-off. 
	
	\medskip
	
\emph{Part 2:} 
	First, from the assumption $\lambda_n \neq 0$ it follows 
	that for any invariant normal state $\varphi$ of $T$ 
	there is $m \in \N$ such that 
	$\varphi(e_{n,n}) \neq 0$ for all $n \geq m$: 
	For, if $\varphi(e_{n+1,n+1}) = 0$, 
	then, by positivity of $\varphi$ we also have 
	$\varphi(e_{n,n+1}) = 0 = \varphi(e_{n+1,n})$. 
	Hence, if $\varphi(e_{n,n}) \neq 0$ but $\varphi(e_{n+1,n+1}) = 0$, 
	then 
	$\varphi(T(e_{n+1,n+1})) = \lambda_n\varphi(e_{n,n}) 
		+ \mu_n\varphi(e_{n+2,n+2}) > 0$, 
	in contradiction to $\varphi$ being invariant. 
	
	Now it suffices to find a sequence $z_0,z_1,z_2,\ldots$ 
	of real numbers, a sequence $\varepsilon_0,\varepsilon_1,\ldots \geq 0$ 
	and a constant $C>0$ with the following properties: 
	\begin{enumerate}[1)]
	\item $\varepsilon_n \neq 0$ for infinitely many $n\in\N$, 
	\item $|z_{n+1}-z_n| \leq C$ for all $n \in \N$, 
	\item $\skapro{(T(Z)-Z-\varepsilon)\xi,\,\xi} \geq 0$ 
	for all $\xi \in D$, where $Z$ denotes the (unbounded) operator 
	$Z := \diag(z_0,z_1,z_2,\ldots)$, and 
	$\varepsilon := \diag(\varepsilon_0,\varepsilon_1,\varepsilon_2,\ldots)$. 
	\end{enumerate}
	{\sloppy
	Namely, if $\varphi$ is a normal invariant state on $\Bl$, 
	choose $n,m \in \N$, $m > n$, with 
	$\varphi(p_{[0,m]}^\bot) < \left(\frac{\delta}{18C}\right)^2$ 
	for $\delta := \varepsilon_n\varphi(e_{n,n}) > 0$. 
	With $Z^{\wedge m} := \diag(z_0,z_1,\ldots,z_{m-1},z_m,z_m,z_m,\ldots) \in \Bl$ 
	we have $\|T(Z^{\wedge m}) - Z^{\wedge m}\| \leq 3C$, 
	since $T(Z^{\wedge m}) - Z^{\wedge m}$ is tridiagonal 
	with entries bounded by $C$ 
	(note that $0 \leq \lambda_k,\mu_k,|\eta_k|\leq 1$). 
	Hence, 
	\begin{align}
	&\varphi(T(Z^{\wedge m+1}) - Z^{\wedge m+1}) 
	= \varphi(\underbrace{p_{[0,m]}(T(Z^{\wedge m+1}) - Z^{\wedge m+1})p_{[0,m]}}_{\geq \varepsilon_n e_{n,n} \ \ \text{, by 3)}}) \nonumber\\
	&\hspace{2.5em}+ \underbrace{\varphi(p_{[0,m]}^\bot(T(Z^{\wedge m+1}) - Z^{\wedge m+1})p_{[0,m]})}_{\leq \sqrt{\varphi(p_{[0,m]}^\bot)}\|T(Z^{\wedge m+1}) - Z^{\wedge m+1}\|} 
	+ \underbrace{\varphi(p_{[0,m]}(T(Z^{\wedge m+1}) - Z^{\wedge m+1})p_{[0,m]}^\bot)}_{\leq \sqrt{\varphi(p_{[0,m]}^\bot)}\|T(Z^{\wedge m+1}) - Z^{\wedge m+1}\|} \nonumber\\
	&\hspace{2.5em}+ \underbrace{\varphi(p_{[0,m]}^\bot(T(Z^{\wedge m+1}) - Z^{\wedge m+1})p_{[0,m]}^\bot)}_{\leq \sqrt{\varphi(p_{[0,m]}^\bot)}\|T(Z^{\wedge m+1}) - Z^{\wedge m+1}\|} 
	> \delta - 3\cdot \frac{\delta}{18C}\cdot 3C > 0 \ ,
	\end{align}
	such that $\varphi$ cannot be invariant. 
	}
	
	Let us turn to the choice of $Z$. 
	First, let $N \in \N_0$ be large enough such that 
	for all $n>N$ there are $\varepsilon_n>0$ with $\lambda_n-\mu_n-\varepsilon_n > 0$ and 
	\begin{align}\label{eq:choice-of-N-and-eps}
	\frac{(\lambda_n-\mu_n-\varepsilon_n)(\lambda_{n+1}-\mu_{n+1}-\varepsilon_{n+1})}{4|\eta_n|^2} 
		> 1 \ .
	\end{align}
	Then define 
	\begin{align}
	Z := \diag(\underbrace{0,0,\ldots,0}_{N+2\text{ times}},1,2,3,\ldots) \quad 
	\text{and}\quad 
	\varepsilon := \diag(\underbrace{0,0,\ldots,0}_{N+1\text{ times}},\varepsilon_{N},\varepsilon_{N+1},\ldots) \ .
	\end{align}
	As we have, for $n>N$, 
\begin{align}
	(T(Z) - Z - \varepsilon)_{n,n} &= \lambda_n(z_{n+1}-z_n) 
		- \mu_n(z_n - z_{n-1}) - \varepsilon_n = \lambda_n - \mu_n - \varepsilon_n \ , 
		\nonumber\\
	(T(Z) - Z - \varepsilon)_{n,1+n} &= \eta_n(z_{n+1}-z_n) = \eta_n 
		= \overline{(T(Z) - Z - \varepsilon)_{1+n,n}} \ , 
\end{align}
	and $(T(Z) - Z - \varepsilon)_{n,m} = 0$ 
	otherwise, we see that $T(Z) - Z \geq \varepsilon$. 
\end{proof}

\subsection{Example: Generalised one-atom masers}
\label{sec:generalised-maser}

Let $\psi$ be a state on $M_2$, parametrised 
by $0 \leq \lambda \leq 1$ and 
$\zeta \in \mathbb D = \{z \in \C \,|\, 
	|z| \leq 1\}$ 
via 
\begin{align}
\psi(x) = \Tr \left(\begin{pmatrix}
\lambda & i\bar\nu \\ -i\nu & 1-\lambda
\end{pmatrix} x \right) \ , \qquad x \in M_2 \ , \qquad 
\text{where } \nu := i\sqrt{\lambda(1-\lambda)}\zeta \ .
\end{align}
Let $a = \diag(1,\alpha_1,\alpha_2,\ldots)$ 
and $b = \diag(0,\beta_1,\beta_2,\ldots)$
be infinite diagonal matrices with 
$-1 \leq \alpha_n,\beta_n \leq 1$ ($n \in \N$) 
and $a^2 + b^2 = \one$. 
So $a,b$ give rise to bounded operators 
on $\ell^2(\N_0)$. 
With $s \in \Bl$ denoting the right-shift $e_n \mapsto e_{n+1}$, 
let 
\begin{align}\label{eq:formula-for-T-psi}
T_\psi(x) &:= \lambda(s^*as x s^*as + s^*b x bs) 
	+ (1-\lambda)(bsxs^*b + axa) \nonumber\\
	&\quad- \bar\nu(axbs - bsxs^*as) + \nu(s^*as x s^*b - s^*b x a) \ , 
	\qquad x \in \Bl \ ,
\end{align}
be the transition operator on $\Bl$ associated with $\psi$, 
cf. \cite{BGKRSS13}. 
Then $T_\psi$ defines a QBDC, which can be regarded as 
a generalisation of the Jaynes-Cummings one-atom-maser. 
The transition rates of $T_\psi$ are shown in Figure \ref{fig:action}.

\begin{figure}[bht]
	\[
	\begin{xy}
		\xymatrix@C=1.8pc@R=1.8pc{
			% Zeile 1:
			\bullet
			\ar@/_/[rrrrdddd]|(.6){(1\hspace{-.1ex}-\hspace{-.1ex}\lambda)\beta_1^2\hspace{4ex}}
			\ar@/_/[rrrr]_{\alpha_1\beta_1\nu}
			\ar@/_/[dddd]_{\alpha_1\beta_1\bar\nu}
			\ar@(l,u)^(.7){(\lambda\alpha_1^2+(1\hspace{-.1ex}-\hspace{-.1ex}\lambda)\alpha_0^2)}
			& & & & \bullet
			\ar@/_/[rrrrdddd]|(.6){(1\hspace{-.1ex}-\hspace{-.1ex}\lambda)\beta_1\beta_2\hspace{6ex}}
			\ar@/_/[llll]_{-\alpha_0\beta_1\bar\nu}
			\ar@/_/[rrrr]_{\alpha_1\beta_2\nu}
			\ar@/_/[dddd]_{\alpha_2\beta_1\bar\nu}
			\ar@(lu,ru)^{(\lambda\alpha_1\alpha_2+(1\hspace{-.1ex}-\hspace{-.1ex}\lambda)\alpha_0\alpha_1)}
			& & & & \bullet
			\ar@/_/[llll]_{-\alpha_0\beta_2\bar\nu}
			\ar@/_/[dddd]_{\alpha_3\beta_1\bar\nu}
			\ar@(lu,ru)^{(\lambda\alpha_1\alpha_3+(1\hspace{-.1ex}-\hspace{-.1ex}\lambda)\alpha_0\alpha_2)}
			& \hspace{-1em}\cdots
			\\ \\ \\ \\
			% Zeile 2:
			\bullet
			\ar@/_/[rrrrdddd]|(.6){(1\hspace{-.1ex}-\hspace{-.1ex}\lambda)\beta_1\beta_2\hspace{6ex}}
			\ar@/_/[rrrr]_{\alpha_2\beta_1\nu}
			\ar@/_/[uuuu]_{\hspace{-0.5ex}-\alpha_0\beta_1\nu}
			\ar@/_/[dddd]_{\alpha_1\beta_2\bar\nu}
			\ar@`{+(9,4),+(-2,9)}_(.56){\text{\begin{tiny}$(\lambda\alpha_1\alpha_2\!\!+\!\!(1\!\!-\!\!\lambda)\alpha_0\alpha_1)$\end{tiny}}}
			& & & & \bullet
			\ar@/_/[rrrrdddd]|(.6){(1\hspace{-.1ex}-\hspace{-.1ex}\lambda)\beta_2^2\hspace{4ex}}
			\ar@/_/[uuuullll]_{\hspace{-1ex}\lambda\beta_1^2}
			\ar@/_/[llll]_{-\alpha_1\beta_1\bar\nu}
			\ar@/_/[rrrr]_{\alpha_2\beta_2\nu}
			\ar@/_/[uuuu]_{\hspace{-0.5ex}-\alpha_1\beta_1\nu}
			\ar@/_/[dddd]_{\alpha_2\beta_2\bar\nu}
			\ar@`{+(9,4),+(-2,9)}_(.56){\text{\begin{tiny}$(\lambda\alpha_2^2\!\!+\!\!(1\!\!-\!\!\lambda)\alpha_1^2)$\end{tiny}}}
			& & & & \bullet
			\ar@/_/[uuuullll]_{\hspace{-1ex}\lambda\beta_1\beta_2}
			\ar@/_/[llll]_{-\alpha_1\beta_2\bar\nu}
			\ar@/_/[uuuu]_{\hspace{-0.5ex}-\alpha_2\beta_1\nu}
			\ar@/_/[dddd]_{\alpha_3\beta_2\bar\nu}
			\ar@`{+(9,4),+(-2,9)}_(.56){\text{\begin{tiny}$(\lambda\alpha_2\alpha_3\!\!+\!\!(1\!\!-\!\!\lambda)\alpha_1\alpha_2)$\end{tiny}}}
			& \hspace{-1em}\cdots
			\\ \\ \\ \\
			% Zeile 3:
			**[d] \underset{\underset{\vdots}{ }}{\bullet} % mit Zeile 4 nur: \bullet
			\ar@/_/[rrrr]_{\alpha_3\beta_1\nu}
			\ar@/_/[uuuu]_{\hspace{-0.5ex}-\alpha_0\beta_2\nu}
			\ar@`{+(9,4),+(-2,9)}_(.56){\text{\begin{tiny}$(\lambda\alpha_1\alpha_3\!\!+\!\!(1\!\!-\!\!\lambda)\alpha_0\alpha_2)$\end{tiny}}}
			& & & & **[d] \underset{\underset{\vdots}{ }}{\bullet}  % mit Zeile 4 nur: \bullet
			\ar@/_/[uuuullll]_{\hspace{-1ex}\lambda\beta_1\beta_2}
			\ar@/_/[llll]_{-\alpha_2\beta_1\bar\nu}
			\ar@/_/[rrrr]_{\alpha_3\beta_2\nu}
			\ar@/_/[uuuu]_{\hspace{-0.5ex}-\alpha_1\beta_2\nu}
			\ar@`{+(9,4),+(-2,9)}_(.56){\text{\begin{tiny}$(\lambda\alpha_2\alpha_3\!\!+\!\!(1\!\!-\!\!\lambda)\alpha_1\alpha_2)$\end{tiny}}}
			& & & & **[d] \underset{\underset{\vdots}{ }}{\bullet}  % mit Zeile 4 nur: \bullet
			\ar@/_/[uuuullll]_{\hspace{-1ex}\lambda\beta_2^2}
			\ar@/_/[llll]_{-\alpha_2\beta_2\bar\nu}
			\ar@/_/[uuuu]_{\hspace{-0.5ex}-\alpha_2\beta_2\nu}
			\ar@`{+(9,4),+(-2,9)}_(.56){\text{\begin{tiny}$(\lambda\alpha_3^2\!\!+\!\!(1\!\!-\!\!\lambda)\alpha_2^2)$\end{tiny}}}
			& **[d] \hspace{-1em}\underset{\underset{\ddots}{ }}{\cdots}  % mit Zeile 4 nur: \hspace{-1em}\cdots
		}
	\end{xy}
	\]
	\caption{Action of $T_\psi$, where $i\zeta\sqrt{\lambda(1-\lambda)}$ is abbreviated by $\nu$; cf. \cite[Fig.\,III.1]{BGKRSS13}.}
	\label{fig:action}
\end{figure}

For this class of QBDCs 
one obtains, as a consequence of Theorem~\ref{thm:main-result}: 

\begin{prop}\label{cor:main-result}
	Let $\hat\beta := \lim\sup_n |\beta_n| > 0$, 
	$\check\beta := \lim\inf_n |\beta_n| > 0$ 
	and put $\hat\alpha := \sqrt{1 - \hat\beta^2}$, 
	$\check\alpha := \sqrt{1 - \check\beta^2}$. 
	We have: 
	\begin{enumerate}
	\item If $\lambda < \frac 12 - \frac{\hat\alpha}{\hat\beta}|\nu|$, then 
		$T_\psi$ has a normal invariant state of exponential fall-off.
	\item If $\lambda > \frac 12 + \frac{\check\alpha}{\check\beta}|\nu|$, then 
		$T_\psi$ admits no normal invariant state.
	\end{enumerate}
\end{prop}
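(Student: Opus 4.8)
The first step is to read the transition rates of $T_\psi$ off \eqref{eq:formula-for-T-psi} (equivalently Figure \ref{fig:action}): computing $T_\psi(e_{k,k})$ gives $\lambda_n=\lambda\beta_{n+1}^2$, $\mu_n=(1-\lambda)\beta_n^2$, $\sigma_n=\lambda\alpha_{n+1}^2+(1-\lambda)\alpha_n^2$ and $\eta_n=\bar\nu\,\alpha_{n+1}\beta_{n+1}$, so $|\eta_n|=|\nu|\,|\alpha_{n+1}\beta_{n+1}|$ (recall $\alpha_0=1$, $\beta_0=0$, $\alpha_k^2+\beta_k^2=1$). The birth–death quotient telescopes, $\pi_n=\prod_{k=0}^{n-1}\lambda_k\big/\prod_{k=1}^{n}\mu_k=(\lambda/(1-\lambda))^n$, so $\kappa=\ln\frac{1-\lambda}{\lambda}$ is a genuine limit, $\kappa>0\iff\lambda<\tfrac12$, and $e^{-\kappa}=\lambda/(1-\lambda)$. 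The two hypotheses of Proposition \ref{cor:main-result} put $\lambda$ on the correct side of $\tfrac12$ ($\lambda<\tfrac12$ in part 1, $2\lambda-1>0$ in part 2).

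The device organising both parts is a $\beta$-adapted diagonal comparison operator, legitimised by the standing assumption $\check\beta>0$ (so $1/\beta_{n+1}^2$ is bounded for large $n$; the finitely many exceptional indices are absorbed by the freedom to set the first entries of $X,Y,Z$ by hand). Taking increments proportional to $1/\beta_{n+1}^2$ cancels the $\beta$-factors on the diagonal of $T(\cdot)-\cdot$ and turns every off-diagonal entry into a multiple of $\rho_n:=|\alpha_{n+1}|/|\beta_{n+1}|$. Since $\beta\mapsto\sqrt{1-\beta^2}/\beta$ is decreasing, $\limsup_n\rho_n=\check\alpha/\check\beta$ and $\liminf_n\rho_n=\hat\alpha/\hat\beta$; these are exactly the numbers in the two thresholds.

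For part 2 I would take $Z=\diag(z_n)$ with $z_{n+1}-z_n=1/\beta_{n+1}^2$ for large $n$ (bounded by $\check\beta>0$). Then $(T(Z)-Z)_{n,n}=\lambda_n\delta_n-\mu_n\delta_{n-1}=\lambda-(1-\lambda)=2\lambda-1$ and $|(T(Z)-Z)_{n,n+1}|=|\eta_n|\delta_n=|\nu|\rho_n$; choosing a small constant $\varepsilon_n=\varepsilon>0$, the criterion \cite[Prop.\,1]{Bar78} gives $T(Z)-Z\geq\varepsilon$ precisely when $2\lambda-1-\varepsilon\geq2|\nu|\rho_n$ for all large $n$, i.e.\ when $2\lambda-1>2|\nu|\limsup_n\rho_n=2|\nu|\,\check\alpha/\check\beta$, which is $\lambda>\tfrac12+\tfrac{\check\alpha}{\check\beta}|\nu|$. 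This realises conditions 1)–3) in the proof of Theorem \ref{thm:main-result}(2) — note that the clean diagonal sidesteps the auxiliary hypothesis $\mu_n<\lambda_n$ — and non-existence follows. This part is essentially routine once the adapted $Z$ is in hand.

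For part 1 the symmetric choice is $X=\diag(x_n)$ with $x_{n+1}-x_n=c^n/\beta_{n+1}^2$ and $Y=\diag(y_n)$ with $y_n=\theta c^{n-1}$, for some $1<c<(1-\lambda)/\lambda$ and small $\theta>0$: the diagonal of $X-T(X)-Y$ collapses to $c^{n-1}[(1-\lambda)-\lambda c-\theta]$, the off-diagonal becomes $|\nu|\rho_n c^n$, and Proposition \ref{prop:far_ex_inv} applies as soon as the factor-4 inequality $[(1-\lambda)-\lambda c-\theta]^2\geq4|\nu|^2c\,\rho_n^2$ holds; the exponential growth of $Y$ then gives exponential fall-off. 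Here lies the main obstacle. Imposing this inequality at \emph{every} large $n$ (as in part 2) would only yield the weaker threshold governed by $\sup_n\rho_n=\check\alpha/\check\beta$, whereas the stated bound $\lambda<\tfrac12-\tfrac{\hat\alpha}{\hat\beta}|\nu|$ involves $\liminf_n\rho_n=\hat\alpha/\hat\beta$. Reaching it requires arranging the positivity so that it is effectively tested only along the subsequence of \enquote{good} links, where $\rho_n$ approaches its $\liminf$ (i.e.\ $|\beta_{n+1}|$ approaches $\hat\beta$): the plan is to build $X$ in blocks delimited by successive good links, letting the flux $\beta_{n+1}^2(x_{n+1}-x_n)$ climb across a good link and damping it across the intervening stretch, and to glue the blocks while preserving $y_n\to\infty$ (needed so that $Y$'s bounded spectral projections remain finite-dimensional). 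Propagating the tridiagonal positivity of $X-T(X)-Y$ across the interior \enquote{bad} links — where $\rho_n$ may be as large as $\check\alpha/\check\beta$ — without destroying either global positivity or the exponential growth of $Y$ is exactly the gap between $\hat\beta$ and $\check\beta$, and is where the entire difficulty of part 1 is concentrated.
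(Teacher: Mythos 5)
Your calculations are correct, and modulo one organisational twist this is the paper's route: read off the rates ($\lambda_n=\lambda\beta_{n+1}^2$, $\mu_n=(1-\lambda)\beta_n^2$, $|\eta_n|=|\nu|\,|\alpha_{n+1}\beta_{n+1}|$ --- the paper's proof writes $\eta_n=-\alpha_n\beta_{n+1}\bar\nu$, but Figure \ref{fig:action} supports your index on $\alpha$, and the discrepancy is asymptotically immaterial), compute $\kappa=\ln\frac{1-\lambda}{\lambda}$, and feed everything into Theorem \ref{thm:main-result}. Your $\beta$-adapted diagonal in part 2 is a mild genuine improvement on the paper's one-line reduction: with increments $z_{n+1}-z_n=1/\beta_{n+1}^2$ the diagonal of $T(Z)-Z$ is identically $2\lambda-1$, so the hypothesis \enquote{$\mu_n<\lambda_n$ eventually} of Theorem \ref{thm:main-result}(2) is indeed sidestepped --- and this matters, since for oscillating $\beta_n$ one can have $\lambda\beta_{n+1}^2-(1-\lambda)\beta_n^2<0$ along bad pairs even when $\lambda>\frac12+\frac{\check\alpha}{\check\beta}|\nu|$, so the paper's \enquote{similarly easy calculation} quietly needs more than it says. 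One caveat you inherit from the paper: the non-existence mechanism also requires $\lambda_n\neq0$ at \emph{all} levels (not just $\liminf_n|\beta_n|>0$); if some $\beta_k=0$ the chain decouples and an invariant state supported under $p_{[0,k-1]}$ survives, so adjusting finitely many entries of $Z$ cannot repair this --- but the paper glosses over the same point.

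The \enquote{main obstacle} you locate in part 1 is not missing mathematics but a notational slip in the statement. Your diagnosis is right: condition \eqref{eq:main-ineq} is a $\liminf$, hence is tested along the \emph{worst} links, and with $g(t)=\sqrt{1-t^2}/t$ decreasing it yields exactly the threshold $\lambda<\frac12-g(\liminf_n|\beta_n|)\,|\nu|$, i.e.\ $\frac12-\frac{\check\alpha}{\check\beta}|\nu|$ in the proposition's literal notation --- not $\frac12-\frac{\hat\alpha}{\hat\beta}|\nu|$. But that weaker version is also all the paper proves: its proof bounds $\beta_n^2/\alpha_n^2$ from below \enquote{for $n$ sufficiently large}, which pins the unsubscripted $\beta$ there to the lower envelope of $|\beta_n|$, and the caption of Figure \ref{fig:summary} (\enquote{$0\neq\hat\beta\leq|\beta_n|\leq\check\beta$}) shows that hat and check are used elsewhere with the opposite meaning; the two definitions in the proposition are simply swapped. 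With that correction, the half you call routine \emph{is} the entire intended proof, and your proposed block construction along \enquote{good} links is not needed --- nor does the paper attempt anything like it; a Lyapunov function certified only on a subsequence is beyond both Theorem \ref{thm:main-result} and the pointwise tridiagonal positivity criterion as used here, and whether the literal limsup-version of part 1 is even true is left open by you and by the paper alike.
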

\begin{proof}
	For $T_\psi$, the coefficients $\lambda_n,\mu_n,\eta_n$ 
	are given by 
	$\lambda_n = \lambda\beta_{n+1}^2$, 
	$\mu_n = (1-\lambda)\beta_n^2$, 
	$\eta_n = -\alpha_n\beta_{n+1}\bar\nu$, 
	as one may read off from Figure \ref{fig:action} 
	or \eqref{eq:formula-for-T-psi}. 
	Therefore, 
	$\kappa = \lim\inf\ln\frac{\mu_n}{\lambda_{n-1}} 
		= \ln\frac{1-\lambda}\lambda$, 
	which is bigger than $0$ if and only if $\lambda < \frac 12$. 
	Now the right-hand side of \eqref{eq:main-ineq} 
	reads 
	$\frac{e^{-\kappa}}{(1 - e^{-\kappa})^2} 
	= \frac{\frac\lambda{1-\lambda}}{\left(1-\frac\lambda{1-\lambda}\right)^2} 
	= \frac{\lambda(1-\lambda)}{(1-2\lambda)^2}$. 
	For the left-hand side we obtain 
	$\frac{\lambda_n\mu_{n+1}}{4|\eta_n|^2} 
	= \frac{\lambda\beta_{n+1}^2(1-\lambda)\beta_n^2}{4\alpha_n^2\beta_{n+1}^2|\nu|^2} 
	= \frac{\lambda(1-\lambda)\beta_n^2}{4|\nu|^2\alpha_n^2}$, 
	which can be estimated from below by 
	$\frac{\lambda(1-\lambda)\beta^2}{4|\nu|^2\alpha^2}$, 
	for $n$ sufficiently large. 
	Now the first statement easily follows. 
	The second statement is obtained from part 2 of 
	Theorem \ref{thm:main-result} by a similarly easy 
	calculation. 
\end{proof}

It is not known to the author how sharp condition \eqref{eq:main-ineq} is. 
To address this question 
we consider the special case where $\alpha_n = \alpha$, $\beta_n = \beta$ 
are constant. This is referred to as \enquote{toy model} 
in \cite{BGKRSS13}. 
Remarkably, for this toy model, for pure atomic states $\psi$ 
and for $\alpha>0$, 
the condition $\lambda < \frac 12 - |\frac\alpha\beta\nu|$ 
in Proposition \ref{cor:main-result}.1 marks the full set of parameters 
$(\lambda,\zeta)$ with $\lambda < \frac 12$ 
for which a \emph{pure} invariant state exists: 
according to \cite[Prop.\,5.1]{BGKRSS13}, a pure 
invariant state exists in this case 
if and only if $\lambda < \frac 12 (1 - \alpha)$ holds. 
With $|\zeta| = 1$ (which amounts for the atomic state $\psi$ being pure), 
one finds 
\begin{gather}
\lambda < \frac 12 - \left|\frac\alpha\beta\nu\right| 
	= \frac 12 - \left|\frac\alpha\beta\right|\lambda(1-\lambda) 
\qquad \Longleftrightarrow \qquad 
\lambda^2 - \lambda + \underbrace{\frac{|\beta|^2}4}_{= \frac{1-|\alpha|^2}4} > 0 \nonumber\\
\Longleftrightarrow \qquad 
\left(\lambda - \frac{1-|\alpha|}2\right) \left(\lambda - \frac{1+|\alpha|}2\right) > 0 \ . \label{eq:umgef-main-ineq}
\end{gather}
For $0 < \lambda < \frac 12$ and $-1 \leq \alpha \leq 1$, the second 
factor on the left-hand side in \eqref{eq:umgef-main-ineq} is negative, hence, 
\eqref{eq:umgef-main-ineq} is equivalent to $\lambda < \frac 12 (1 - |\alpha|)$. 
If $-1 < \alpha < 0$, then there exist pure states $\psi$ in the 
upper Bloch hemisphere ($\lambda > \frac 12$) for which 
$T_\psi$ admits an invariant state. These are obviously not 
captured by Theorem \ref{thm:main-result}. 

The following proposition extends the non-existence statement in 
Proposition \ref{cor:main-result} and allows to strengthen 
the above observations. 
If $\alpha > 0$, it determines a parameter region in the lower Bloch hemisphere 
where the toy model does not admit 
invariant states. 

\begin{prop}
Let $\alpha_n = \alpha$ and $\beta_n = \beta$ ($n>0$) 
with $-1 < \alpha,\beta < 1$ and $\alpha^2 + \beta^2 = 1$. 
For $0 < \lambda < 1$, if 
$\beta^2 < \frac 1{1 + \frac{(1-2\lambda)^2}{4|\nu|^2}}$ and 
$\frac{1-\alpha}{|\beta|} < \frac{|\nu|}{1-\lambda}$, then $T_\psi$ has 
no normal invariant state. 
\end{prop}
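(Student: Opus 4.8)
The plan is to argue by contradiction. Suppose $T_\psi$ has a normal invariant state $\varphi$ and encode it through its matrix elements $r_{n,m} := \varphi(e_{n,m})$, with diagonal $P_n := r_{n,n} \geq 0$ satisfying $\sum_n P_n = 1$ and first off-diagonal $f_n := r_{n,n+1}$. In the toy model the coefficients are constant, namely $\lambda_n = \lambda\beta^2$, $\mu_n = (1-\lambda)\beta^2$ and $\eta_n = -\alpha\beta\bar\nu$ (read off from Figure~\ref{fig:action} or \eqref{eq:formula-for-T-psi}), so the invariance equations $\varphi\circ T_\psi = \varphi$ become a constant-coefficient linear recursion in $n$. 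Testing invariance against $e_{n,n}$ and using $\lambda_n+\mu_n = \beta^2$ yields the balance law
\begin{align*}
	\lambda\beta^2 P_{n-1} - \beta^2 P_n + (1-\lambda)\beta^2 P_{n+1} + 2\,\mathrm{Re}(\eta_n f_n) = 0 \qquad (n\geq 1),
\end{align*}
which in terms of the current $J_n := (1-\lambda)\beta^2 P_{n+1} - \lambda\beta^2 P_n$ reads $J_n - J_{n-1} = -2\,\mathrm{Re}(\eta_n f_n)$, with boundary value $J_0 = -2\,\mathrm{Re}(\eta_0 f_0)$ coming from $\mu_0 = 0$. Normality forces $P_n, f_n \to 0$, hence $J_n \to 0$ and $\sum_k \mathrm{Re}(\eta_k f_k) = 0$.

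Before exploiting this, I would record why the criterion of Theorem~\ref{thm:main-result} is not available in the regime of interest — by the remark preceding the proposition this is, for $\alpha > 0$, the lower Bloch hemisphere $\lambda < \tfrac12$. There a bounded diagonal witness $Z = \diag(z_n)$ with bounded increments cannot fulfil $T_\psi(Z) - Z \geq \varepsilon$: positivity of the $(0,0)$-entry forces $z_1 - z_0 \geq 0$, whereas the bulk condition $\lambda_n(z_{n+1}-z_n) - \mu_n(z_n - z_{n-1}) > 0$ together with $\lambda_n < \mu_n$ forces the increments to be negative, and the vanishing of $\mu_0$ makes the two incompatible. Hence I would analyse the recursion directly. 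The equation obtained by testing invariance against $e_{n,n+1}$ couples $f_n$ to $P_n, P_{n+1}$ and to the second off-diagonal $g_n := r_{n,n+2}$, and in general the $d$-th off-diagonal couples to the $(d+1)$-th, so the band does not close; the task is to control this infinite constant-coefficient system under the two standing constraints that $r$ be trace-class and positive semi-definite.

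The decisive point should be that trace-class-ness pins the decay to the classical rate. From $J_n \to 0$ one expects the ratio $P_{n+1}/P_n$ to approach $\rho_c := \lambda/(1-\lambda)$, the critical value at which the geometric growth $\rho_c^{-n}$ of a would-be Lyapunov witness and the decay $\rho_c^{n}$ of the state exactly balance. Condition~(I), which rearranges to $4\alpha^2|\nu|^2 > (1-2\lambda)^2\beta^2$ and is thus exactly the negation of the hypothesis of Proposition~\ref{cor:main-result}.2, is what I expect to force every summable solution of the coupled $(P_n, f_n, g_n, \dots)$ recursion into the single geometric family $r_{n,m} \propto \xi^{n}\bar\xi^{m}$ (heuristically, it makes the transfer operator carry no second, competing positive decay mode at the critical rate). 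Condition~(II) then disposes of this geometric family: for a pure atomic state $(|\zeta| = 1)$ it rearranges to $\lambda > \tfrac12(1-\alpha)$, precisely the negation of the existence criterion for pure invariant states of the toy model in \cite[Prop.\,5.1]{BGKRSS13}. Combining the two, any trace-class positive fixed point would be asymptotically a non-normalizable geometric state, the desired contradiction.

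The main obstacle is exactly the non-closing of the band: the transfer operator of the recursion is genuinely infinite-dimensional, so one may not truncate to the diagonal-plus-first-off-diagonal subsystem. The diagonal current law together with the Cauchy--Schwarz bound $|f_n| \leq \sqrt{P_n P_{n+1}}$ is, by itself, too weak, since at the rate $\rho_c$ the relevant coefficient $(1-\lambda)\beta^2 - \lambda\beta^2/\rho_c$ vanishes and no contradiction survives. The real work therefore lies in establishing an a priori off-diagonal decay estimate across all bands, e.g.\ $|r_{n,n+d}| \leq C\,q^{d}\sqrt{P_n P_{n+d}}$ with $q < 1$, that legitimises a two-variable generating-function analysis and lets one read off from (I) and (II) that the associated symbol admits no summable positive-definite kernel. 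A natural alternative, which I would develop in parallel, is to sidestep the band analysis by proving that every extreme normal invariant state of this $T_\psi$ is necessarily pure, thereby reducing the whole statement to the pure-state computation governed by condition~(II); establishing that purity/reduction is then itself the crux.
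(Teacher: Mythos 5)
Your proposal is a research plan, not a proof: both of the routes you sketch leave their decisive step unestablished, and you say so yourself. On the state side you correctly observe that the invariance equations for $r_{n,m}=\varphi(e_{n,m})$ never close on finitely many bands (the $d$-th off-diagonal couples to the $(d+1)$-th), but everything you then invoke to overcome this is conjecture: that trace-class-ness pins $P_{n+1}/P_n$ to $\lambda/(1-\lambda)$ (in fact $J_n\to 0$ places no constraint on this ratio at all --- it holds trivially for any rapidly decaying $P_n$), that condition (I) forces every summable solution into a single geometric family $r_{n,m}\propto \xi^n\bar\xi^{\,m}$, and the a priori bound $|r_{n,n+d}|\le C\,q^d\sqrt{P_nP_{n+d}}$, which you explicitly flag as \enquote{the real work}. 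The fallback route --- reducing to pure invariant states by showing every extreme normal invariant state is pure --- is equally unsupported, and would moreover require a classification of pure invariant states for non-pure atomic states $\psi$, which \cite[Prop.\,5.1]{BGKRSS13} (a statement about pure $\psi$) does not provide. No contradiction is ever derived, so there is a genuine gap.

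It may help to see how the paper closes exactly the difficulty you ran into: it works on the observable side rather than the state side. One constructs a witness $A_{\mb x,\mb y} = N + \sum_{k>0}\left( (x_k+y_kN)(s^*)^k + s^k(\overline{x_k}+\overline{y_k}N) \right)$ and solves $T_\psi(A_{\mb x,\mb y})-A_{\mb x,\mb y}=C\cdot\one$ \emph{exactly}: the band equations give the constant-coefficient recurrence \eqref{eq:rec-for-ys} for the $y_k$, and your condition (I) is precisely the condition that its characteristic roots lie on the unit circle, keeping $\mb y$ bounded; the $n=0$ boundary equations give the recurrence \eqref{eq:rec-reln-for-x} for the $x_k$, and condition (II) is precisely the condition making that recursion contractive, keeping $\mb x$ bounded. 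Boundedness permits truncation to bounded operators $A^{\wedge m}_{\mb x,\mb y}$ with error $T_\psi(A^{\wedge m}_{\mb x,\mb y})-A^{\wedge m}_{\mb x,\mb y}-C\,p_{[0,m-1]}$ of norm at most $2c$ uniformly in $m$, concentrated near the cut; then for any normal state with $\varphi(p_{[0,m-1]})>1-\varepsilon$ and $\varepsilon$ small one gets $|\varphi(T_\psi(A^{\wedge m}_{\mb x,\mb y}))-\varphi(A^{\wedge m}_{\mb x,\mb y})| \geq C(1-\varepsilon)-4\sqrt{\varepsilon}\,c>0$, contradicting invariance. Note that hypotheses (I) and (II) enter as stability conditions for these two recurrences --- quite different roles from the ones you conjectured --- and the whole construction exists precisely to avoid the uniform-in-$d$ off-diagonal decay estimate that your state-side approach would require you to prove from scratch.
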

\begin{proof}
The idea is to construct, for any given normal state, an observable 
whose expectation value under this state strictly increases or 
decreases. 
These observables will be built from the \enquote{number operator} 
$N = \diag(0,1,2,3,\ldots)$, which is an unbounded operator 
on $\ell^2(\N_0)$ with domain $D = \spann\{e_n\,|\,n\in\N_0\}$, 
or rather from its bounded truncations 
$N^{\wedge m} := \diag(0,1,\ldots,m-1,m,m,m,\ldots)$ 
for $m \geq 0$ and $N^{\wedge m} := 0$ for $m < 0$. 

For sequences 
$\mb x = (x_k)_{k\in\N}$, $\mb y = (y_k)_{k\in\N} \subset \C$, 
let 
\begin{align}
A_{\mb x,\mb y} := N + \sum_{k>0}\left( \left( x_k + y_kN \right) (s^*)^k 
	+ s^k \left( \overline{x_k} + \overline{y_k}N \right) \right) \  
\end{align}
as an unbounded operator on $\ell^2(\N_0)$ with domain $D$, and 
\begin{align}
A_{\mb x,\mb y}^{\wedge m} := N^{\wedge m} + \sum_{k=1}^{2m}\left( \left( x_k + y_kN^{\wedge m-\lfloor \frac k2 \rfloor} \right) (s^*)^k 
	+ s^k \left( \overline{x_k} + \overline{y_k}N^{\wedge m-\lfloor \frac k2 \rfloor} \right) \right) \ .
\end{align}
The proof is split into several steps: 

\medskip

\emph{Step 1:} If $\beta^2 < \frac 1{1 + \frac{(1-2\lambda)^2}{4|\nu|^2}}$, 
then for each $C \in \R$ there exist sequences 
$\mb x = (x_k), \mb y = (y_k) \subset \C$ and $C' > 0$, such that 
$|y_k| < C'$ and $T_\psi(A_{\mb x,\mb y}) - A_{\mb x,\mb y} = C\cdot \one$. 

\emph{Proof:} 
From the representation of the transition rates 
in Figure \ref{fig:action} one reads 
off that for an operator $X$ on $\ell^2(\N_0)$ with domain $D$ and 
for $n > 0$, $k \geq 0$ we have 
\begin{align}
T_\psi(X)_{n,n+k} &= \alpha^2 X_{n,n+k} + (1-\lambda)\beta^2 X_{n-1,n+k-1} 
	+ \lambda\beta^2 X_{n+1,n+k+1} \\
	&\quad+ \alpha\beta\nu(X_{n,n+k-1} - X_{n+1,n+k}) 
	+ \alpha\beta\bar\nu(X_{n-1,n+k} - X_{n,n+k+1}) \ .\nonumber
\end{align}
Hence, 
\begin{align}\label{eq:T-transitions-via-coords}
&(T_\psi(X) - X)_{n,n+k} = (1-\lambda)\beta^2(X_{n-1,n+k-1} - X_{n,n+k}) 
	+ \lambda\beta^2(X_{n+1,n+k+1} - X_{n,n+k}) \nonumber\\
	&\hspace{6em}+ \alpha\beta\nu(X_{n,n+k-1} - X_{n+1,n+k}) 
	+ \alpha\beta\bar\nu(X_{n-1,n+k} - X_{n,n+k+1}) \ . 
\end{align}
Inserting $X = A_{\mb x,\mb y}$, one finds that 
$T_\psi(A_{\mb x,\mb y}) - A_{\mb x,\mb y} \stackrel!= C\cdot \one$ 
implies for $k>0$, 
\begin{gather}
0 \stackrel{!}= (1-\lambda)\beta^2(-y_k) 
	+ \lambda\beta^2 y_k 
	+ \alpha\beta\nu(-y_{k-1}) 
	+ \alpha\beta\bar\nu(-y_{k+1}) \nonumber\\
\Longrightarrow\quad y_{k+1} = \frac{2\lambda - 1}{\bar\nu}\frac{\beta}{\alpha} y_k - 
		\frac\nu{\bar\nu} y_{k-1} \ . \label{eq:rec-for-ys}
\end{gather}
The roots of the characteristic polynomial 
$x^2 - \frac{2\lambda - 1}{\bar\nu}\frac{\beta}{\alpha}x + \frac\nu{\bar\nu}$ 
of the recurrence relation for $\mb y$ are given by 
\begin{align}
x_{1/2} &= \frac 12\left( \frac{2\lambda - 1}{\bar\nu}\frac{\beta}{\alpha} 
	\pm \sqrt{\frac{(2\lambda - 1)^2}{\bar\nu^2}\frac{\beta^2}{\alpha^2} 
		- 4 \frac\nu{\bar\nu}} \right) 
		\nonumber\\ 
	&= \frac{2\lambda - 1}{2\bar\nu}\frac{\beta}{\alpha}\left( 1 
	\pm \sqrt{1 - \frac{4|\nu|^2\alpha^2}{(2\lambda - 1)^2\beta^2}} \right) \ .
\end{align}
As 
\begin{align}
\beta^2 < \frac 1{1 + \frac{(1-2\lambda)^2}{4|\nu|^2}} 
\ \Leftrightarrow \ 
\beta^2 \left( 1 + \frac{4|\nu|^2}{(2\lambda-1)^2} \right) < \frac{4|\nu|^2}{(2\lambda-1)^2} 
\ \Leftrightarrow \ 
1 > \frac{4|\nu|^2\alpha^2}{(2\lambda - 1)^2\beta^2} \ ,
\end{align}
the discriminant is negative, and therefore the absolute values of the 
roots $x_1, x_2$ are 
\begin{align}
|x_{1/2}|^2 = \left(\frac{2\lambda - 1}{2\bar\nu}\frac{\beta}{\alpha}\right)^2 \left( 1 
	+ \frac{4|\nu|\alpha^2}{(2\lambda - 1)^2\beta^2} - 1 \right) = 1 \ .
\end{align}
Now the general solution to the recurrence \eqref{eq:rec-for-ys} 
for $\mb y$, given by 
$y_k = x_1^k \cdot y' + x_2^k \cdot y''$ for some $y',y'' \in \C$, 
shows that the sequence $\mb y$ is necessarily bounded. 

Inserting $X = A_{\mb x,\mb y}$ into \eqref{eq:T-transitions-via-coords}, 
but now putting $k=0$, one finds that 
$T_\psi(A_{\mb x,\mb y}) - A_{\mb x,\mb y} = C\cdot \one$ 
implies 
\begin{align}
C + (1-2\lambda)\beta^2 = -2\alpha\beta\cdot\Re\left( \bar\nu y_1 \right) \ .
\end{align}

Conversely, choosing $y_1$ such that this equation holds, 
and $y_k$, $k>1$, via the recurrence \eqref{eq:rec-for-ys}, 
we see that 
$(T_\psi(A_{\mb x,\mb y}) - A_{\mb x,\mb y})_{n,n+k} = (C\cdot \one)_{n,n+k}$ 
holds for all $n>0$ and $k\geq 0$. 
The equations 
$(T_\psi(A_{\mb x,\mb y}) - A_{\mb x,\mb y})_{0,k} \stackrel != 0$, 
$k>0$, 
lead to (and are solved by) the following recurrence relation for $\mb x$: 
\begin{gather}
x_k \stackrel != (\lambda\alpha^2 + (1-\lambda)\alpha)x_k 
	+ \alpha\beta\nu(-y_{k-1}) + \lambda\beta^2(x_k + y_k) 
	- \beta\bar\nu x_{k+1} \nonumber\\
\Longleftrightarrow \qquad x_{k+1} = \frac 1{\beta\bar\nu}\left( 
	(1-\lambda)(\alpha-1)x_k + \lambda\beta^2 y_k 
	- \alpha\beta\nu y_{k-1} \right) \ . \label{eq:rec-reln-for-x}
\end{gather}
The condition 
$(T_\psi(A_{\mb x,\mb y}) - A_{\mb x,\mb y})_{0,0} \stackrel != 0$ 
reads 
\begin{align}
0 \stackrel != \lambda\beta^2 - 2\beta \Re(\bar\nu x_1) \ .
\end{align}
Hence, choosing the sequences $\mb x, \mb y$ according to the 
initial conditions and recurrence relations just given, the equations 
$(T_\psi(A_{\mb x,\mb y}) - A_{\mb x,\mb y})_{n,n+k} = (C\cdot \one)_{n,n+k}$ 
are satisfied for all $n, k \geq 0$. 
The equations 
$(T_\psi(A_{\mb x,\mb y}) - A_{\mb x,\mb y})_{n+k,n} = (C\cdot \one)_{n+k,n}$ 
hold automatically, as they are the complex conjugates of the former. 

\medskip

\emph{Step 2:} 
Let $\beta^2 < \frac 1{1 + \frac{(1-2\lambda)^2}{4|\nu|^2}}$, 
fix $C > 0$, and let $\mb x, \mb y$ be as above. 
If $\frac{1-\alpha}{|\beta|} < \frac{|\nu|}{1-\lambda}$, 
then there exists $c > 0$ such that 
$\left\| T_\psi(A_{\mb x,\mb y}^{\wedge m}) - 
	A_{\mb x,\mb y}^{\wedge m} - C\cdot p_{[0,m-1]} \right\| < 2c$ 
holds for all $m \in \N$. 

\emph{Proof:} 
One finds that the matrix entries 
$\left( T_\psi(A_{\mb x,\mb y}^{\wedge m}) - 
	A_{\mb x,\mb y}^{\wedge m} - C\cdot p_{[0,m-1]} \right)_{j,k}$ 
vanish unless $j+k \in \{2m,2m+1\}$. 
If $\frac{1-\alpha}{|\beta|} < \frac{|\nu|}{1-\lambda}$, 
then not only the sequence $\mb y$ is bounded, but so 
is the sequence $\mb x$, as one 
sees by inspection of the recurrence relation \eqref{eq:rec-reln-for-x}. 
As the non-vanishing matrix entries of 
$T_\psi(A_{\mb x,\mb y}^{\wedge m}) - 
	A_{\mb x,\mb y}^{\wedge m} - C\cdot p_{[0,m-1]}$ 
are expressed in terms of simple linear combinations of 
the $x_k, y_k$, one easily sees that the matrix entries of 
$T_\psi(A_{\mb x,\mb y}^{\wedge m}) - 
	A_{\mb x,\mb y}^{\wedge m} - C\cdot p_{[0,m-1]}$ 
are also bounded by some $c > 0$. 
Since the non-vanishing matrix entries of 
$T_\psi(A_{\mb x,\mb y}^{\wedge m}) - 
	A_{\mb x,\mb y}^{\wedge m} - C\cdot p_{[0,m-1]}$ 
are concentrated on two (anti-)diagonals, the operator-norm  of 
$T_\psi(A_{\mb x,\mb y}^{\wedge m}) - 
	A_{\mb x,\mb y}^{\wedge m} - C\cdot p_{[0,m-1]}$ 
is bounded by $2c$. 

\medskip

\emph{Step 3:} If $\beta^2 < \frac 1{1 + \frac{(1-2\lambda)^2}{4|\nu|^2}}$ 
and $\frac{1-\alpha}{|\beta|} < \frac{|\nu|}{1-\lambda}$, then 
$T_\psi$ has no normal invariant state. 

\emph{Proof:} Let $\varphi$ be a normal state on $\Bl$, 
fix some $C > 0$, and let $\mb x, \mb y, c$ be as above. 
Choose $\varepsilon > 0$ such that 
$C\cdot(1-\varepsilon) > 4\sqrt\varepsilon c$, and 
$m \in \N$ such that $\varphi(p_{[0,m-1]}) > 1-\varepsilon$. 
Then, abbreviating 
$Z := T_\psi(A_{\mb x,\mb y}^{\wedge m}) - 
	A_{\mb x,\mb y}^{\wedge m} - C\cdot p_{[0,m-1]}$ 
and using $Z = Z p_{[0,m-1]}^\bot + p_{[0,m-1]}^\bot Z p_{[0,m-1]}$, 
we see 
\begin{align}
|\varphi(T_\psi(A_{\mb x,\mb y}^{\wedge m}) 
	- \varphi(A_{\mb x,\mb y}^{\wedge m})| 
	&\geq |\varphi(C \cdot p_{[0,m-1]})| 
		- |\varphi(Z)| \nonumber\\
	&> C\cdot(1-\varepsilon) 
		\underbrace{- \, |\varphi(Z p_{[0,m-1]}^\bot)|}_{\geq -
			\sqrt{\varphi(ZZ^*)\varphi(p_{[0,m-1]}^\bot)}} 
		\underbrace{- \, |\varphi(p_{[0,m-1]}^\bot Z p_{[0,m-1]})|}_{\geq -
			\sqrt{\varphi(p_{[0,m-1]}^\bot)\varphi(Z^*Z)}} 
		\nonumber\\
	&\geq C\cdot(1-\varepsilon) - 4\sqrt\varepsilon c > 0 \ . 
\end{align}
Hence, $\varphi$ cannot be invariant. 
\end{proof}

A straightforward calculation shows that 
$\beta^2 < \frac 1{1 + \frac{(1-2\lambda)^2}{4|\nu|^2}}$ 
is equivalent to 
$\frac 12 - |\frac\alpha\beta\nu| < \lambda 
	< \frac 12 + |\frac\alpha\beta\nu|$. 
It is easy to see that for $|\zeta| = 1$, the condition 
$\frac{1-\alpha}{|\beta|} < \frac{|\nu|}{1-\lambda}$ 
is equivalent to $\lambda > \frac 12(1-\alpha)$. 
Hence, as a consequence of the previous proposition and 
the non-existence statement in Proposition \ref{cor:main-result}, 
we find that for pure states $\psi$ 
with $\lambda \neq \frac 12(1\pm\alpha)$, the toy model 
transition operator $T_\psi$ admits only pure invariant 
states.

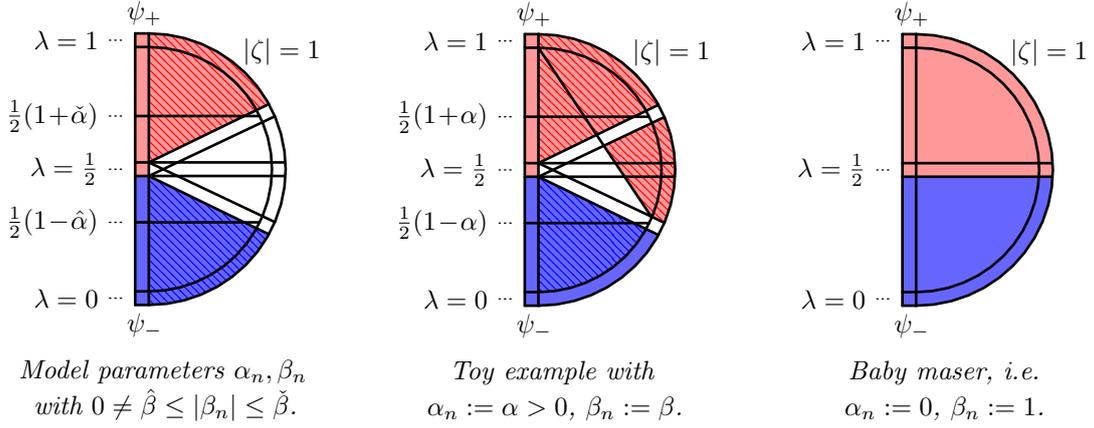
\begin{figure}[htb]
	\centering
	\begin{minipage}[c]{0.3\linewidth}
		\centering
		\begin{small}\begin{tikzpicture}[line cap=round,line join=round,>=triangle 45,x=1.8cm,y=1.8cm]
			% Fuellungen:
			\fill[fill=red,fill opacity=0.4] (1.9,3) -- (1.9,2.9) -- (2,2.9) -- (2,3) -- cycle; % (psi_+)
			\fill[fill=blue,fill opacity=0.6] (1.9,1) -- (1.9,1.1) -- (2,1.1) -- (2,1) -- cycle; % (psi_-)
			% (zeta=0):
			\fill[fill=red,fill opacity=0.4] (2,1.95) -- (1.9,1.95) -- (1.9,2.9) -- (2,2.9) -- cycle; % (zeta=0, 1/2<=lambda<1)
			\fill[fill=blue,fill opacity=0.6] (2,1.95) -- (1.9,1.95) -- (1.9,1.1) -- (2,1.1) -- cycle; % (zeta=0, 0<lambda<1/2)
			% (0<|zeta|<1):
			% oben:
			\fill [shift={(2,2)},fill=red,fill opacity=.4]  
			(0,.05) --  plot[domain=0.5:1.57,variable=\t]({1*1*cos(\t r)+0*1*sin(\t r)},{0*1*cos(\t r)+1*1*sin(\t r)}) -- cycle; 
			\fill [shift={(2,2)},pattern=north west lines,pattern color=red]  
			(0,.05) --  plot[domain=0.5:1.57,variable=\t]({1*1*cos(\t r)+0*1*sin(\t r)},{0*1*cos(\t r)+1*1*sin(\t r)}) -- cycle; 
			% unten:
			\fill [shift={(2,2)},fill=blue,fill opacity=0.6]  plot[domain=-0.34:-1.57,variable=\t](0,-.05) -- plot[domain=-1.57:-0.5,variable=\t]({1*1*cos(\t r)+0*1*sin(\t r)},{0*1*cos(\t r)+1*1*sin(\t r)}) -- cycle;
			\fill [shift={(2,2)},pattern=north west lines,pattern color=blue]  plot[domain=-0.34:-1.57,variable=\t](0,-.05) -- plot[domain=-1.57:-0.5,variable=\t]({1*1*cos(\t r)+0*1*sin(\t r)},{0*1*cos(\t r)+1*1*sin(\t r)}) -- cycle;

			% Waagrechte Linien:
			\draw [line width=1pt] (1.9,2.05)-- (2.99,2.05); % (lambda=1/2) oben
			\draw [line width=1pt] (1.9,1.95)-- (2.99,1.95); % (lambda=1/2) unten
			\draw [line width=1pt] (1.9,{2 + .9*sin(-0.45 r)})-- ({2 + .9*cos(-0.45 r)},{2 + .9*sin(-0.45 r)}); % (lambda=1/2(1-\hat\alpha))
			\draw [line width=1pt] (1.9,{2 + .9*sin(0.45 r)})-- ({2 + .9*cos(0.45 r)},{2 + .9*sin(0.45 r)}); % (lambda=1/2(1+\check\alpha))
			\draw [line width=1pt] (1.9,3)-- (2,3); % (lambda=1) oben (kurz)
			\draw [line width=1pt] (1.9,2.9)-- (2,2.9); % (lambda=1) unten (kurz)
			\draw [line width=1pt] (1.9,1.1)-- (2,1.1); % (lambda=0) oben (kurz)
			\draw [line width=1pt] (1.9,1)-- (2,1); % (lambda=0) unten (kurz)
			% Sekrechte Linien:
			\draw [line width=1pt] (1.9,3)-- (1.9,1); % links (zeta=0) links
			\draw [line width=1pt] (2,1)-- (2,3); % links (zeta=0) rechts
			% Boegen:
			\draw [shift={(2,2)},line width=1pt]  plot[domain=-1.57:1.57,variable=\t]({1*1*cos(\t r)+0*1*sin(\t r)},{0*1*cos(\t r)+1*1*sin(\t r)}); % aeusserer Bogen
			\draw [shift={(2,2)},line width=1pt]  plot[domain=-1.57:1.57,variable=\t]({1*0.9*cos(\t r)+0*0.9*sin(\t r)},{0*0.9*cos(\t r)+1*0.9*sin(\t r)}); % innerer Bogen
			% schraege Linien:
			\draw [line width=1pt] (2,2.05)-- ({2 + cos(-0.4 r)},{2 + sin(-0.4 r)});
			\draw [line width=1pt] (2,1.95)-- ({2 + cos(-0.5 r)},{2 + sin(-0.5 r)});
			\draw [line width=1pt] (2,1.95)-- ({2 + cos(0.4 r)},{2 + sin(0.4 r)});
			\draw [line width=1pt] (2,2.05)-- ({2 + cos(0.5 r)},{2 + sin(0.5 r)});

			% Beschriftung:
			\draw (1.7,2.95) node[anchor=east] {$\lambda=1$};
			\draw [dash pattern=on .5pt off 1.5pt,line width=.5pt] (1.8,2.95)-- (1.7,2.95);
			\draw (1.7,{2 + .9*sin(-0.45 r)}) node[anchor=east] {$\frac12(1\!-\!\hat\alpha)$};
			\draw [dash pattern=on .5pt off 1.5pt,line width=.5pt] (1.8,{2 + .9*sin(-0.45 r)})-- (1.7,{2 + .9*sin(-0.45 r)});
			\draw (1.7,{2 + .9*sin(0.45 r)}) node[anchor=east] {$\frac12(1\!+\!\check\alpha)$};
			\draw [dash pattern=on .5pt off 1.5pt,line width=.5pt] (1.8,{2 + .9*sin(0.45 r)})-- (1.7,{2 + .9*sin(0.45 r)});
			\draw (1.7,2) node[anchor=east] {$\lambda=\frac12$};
			\draw [dash pattern=on .5pt off 1.5pt,line width=.5pt] (1.8,2)-- (1.7,2);
			\draw (1.7,1.05) node[anchor=east] {$\lambda=0$};
			\draw [dash pattern=on .5pt off 1.5pt,line width=.5pt] (1.8,1.05)-- (1.7,1.05);
			\draw (2.62,2.7) node[anchor=south west] {$|\zeta|=1$};
			\draw (1.97,2.99) node[anchor=south] {$\psi_+$};
			\draw (1.97,1.01) node[anchor=north] {$\psi_-$};
		\end{tikzpicture}\\
		\emph{Model parameters $\alpha_n,\beta_n$\\ with 
		$0\neq\hat\beta\leq|\beta_n|\leq\check\beta$.}\end{small}
	\end{minipage}\hspace{0.04\linewidth}
	\begin{minipage}[c]{0.3\linewidth}
		\centering
		\begin{small}\begin{tikzpicture}[line cap=round,line join=round,>=triangle 45,x=1.8cm,y=1.8cm]
			% Fuellungen:
			\fill[fill=red,fill opacity=0.4] (1.9,3) -- (1.9,2.9) -- (2,2.9) -- (2,3) -- cycle; % (psi_+)
			\fill[fill=blue,fill opacity=0.6] (1.9,1) -- (1.9,1.1) -- (2,1.1) -- (2,1) -- cycle; % (psi_-)
			% (zeta=0):
			\fill[fill=red,fill opacity=0.4] (2,1.95) -- (1.9,1.95) -- (1.9,2.9) -- (2,2.9) -- cycle; % (zeta=0, 1/2<=lambda<1)
			\fill[fill=blue,fill opacity=0.6] (2,1.95) -- (1.9,1.95) -- (1.9,1.1) -- (2,1.1) -- cycle; % (zeta=0, 0<lambda<1/2)
			% (0<|zeta|<1):
			% (|zeta|=1):
			\fill [shift={(2,2)},fill=red,fill opacity=.4]  plot[domain=0.58:1,variable=\s]({(1-\s)*0+\s*.9*cos(-.4 r)},{(1-\s)*.9+\s*.9*sin(-.4 r)}) --  plot[domain=-0.4:0.4,variable=\t]({1*1*cos(\t r)+0*1*sin(\t r)},{0*1*cos(\t r)+1*1*sin(\t r)}) -- cycle; % (|zeta|=1, 1/2(1-alpha)<lambda<1)
			\fill [shift={(2,2)},pattern=north west lines,pattern color=red]  plot[domain=0.58:1,variable=\s]({(1-\s)*0+\s*.9*cos(-.4 r)},{(1-\s)*.9+\s*.9*sin(-.4 r)}) --  plot[domain=-0.4:0.4,variable=\t]({1*1*cos(\t r)+0*1*sin(\t r)},{0*1*cos(\t r)+1*1*sin(\t r)}) -- cycle; % (|zeta|=1, 1/2(1-alpha)<lambda<1)
			\fill [shift={(2,2)},fill=red,fill opacity=.4]  
			(0,.05) --  plot[domain=0.5:1.57,variable=\t]({1*1*cos(\t r)+0*1*sin(\t r)},{0*1*cos(\t r)+1*1*sin(\t r)}) -- cycle; 
			\fill [shift={(2,2)},pattern=north west lines,pattern color=red]  
			(0,.05) --  plot[domain=0.5:1.57,variable=\t]({1*1*cos(\t r)+0*1*sin(\t r)},{0*1*cos(\t r)+1*1*sin(\t r)}) -- cycle; 
			\fill [shift={(2,2)},fill=blue,fill opacity=0.6]  plot[domain=-0.5:-1.57,variable=\t]({1*0.9*cos(\t r)+0*0.9*sin(\t r)},{0*0.9*cos(\t r)+1*0.9*sin(\t r)}) -- plot[domain=-1.57:-0.5,variable=\t]({1*1*cos(\t r)+0*1*sin(\t r)},{0*1*cos(\t r)+1*1*sin(\t r)}) -- cycle; % (|zeta|=1, 0<lambda<1/2(1-alpha))
			% (0<|zeta|<1):
			% (|zeta|=1):
			\fill [shift={(2,2)},fill=blue,fill opacity=0.6]  plot[domain=-0.34:-1.57,variable=\t](0,-.05) -- plot[domain=-1.57:-0.5,variable=\t]({1*0.9*cos(\t r)+0*0.9*sin(\t r)},{0*0.9*cos(\t r)+1*0.9*sin(\t r)}) -- cycle; % (|zeta|=1, 0<lambda<1/2(1-alpha))
			\fill [shift={(2,2)},pattern=north west lines,pattern color=blue]  plot[domain=-0.34:-1.57,variable=\t](0,-.05) -- plot[domain=-1.57:-0.5,variable=\t]({1*0.9*cos(\t r)+0*0.9*sin(\t r)},{0*0.9*cos(\t r)+1*0.9*sin(\t r)}) -- cycle; % (|zeta|=1, 0<lambda<1/2(1-alpha))

			% Waagrechte Linien:
			\draw [line width=1pt] (1.9,2.05)-- (2.99,2.05); % (lambda=1/2) oben
			\draw [line width=1pt] (1.9,1.95)-- (2.99,1.95); % (lambda=1/2) unten
			\draw [line width=1pt] (1.9,{2 + .9*sin(-0.45 r)})-- ({2 + .9*cos(-0.45 r)},{2 + .9*sin(-0.45 r)}); % (lambda=1/2(1-alpha))
			\draw [line width=1pt] (1.9,{2 + .9*sin(0.45 r)})-- ({2 + .9*cos(0.45 r)},{2 + .9*sin(0.45 r)}); % (lambda=1/2(1+alpha))
			\draw [line width=1pt] (1.9,3)-- (2,3); % (lambda=1) oben (kurz)
			\draw [line width=1pt] (1.9,2.9)-- (2,2.9); % (lambda=1) unten (kurz)
			\draw [line width=1pt] (1.9,1.1)-- (2,1.1); % (lambda=0) oben (kurz)
			\draw [line width=1pt] (1.9,1)-- (2,1); % (lambda=0) unten (kurz)
			% Sekrechte Linien:
			\draw [line width=1pt] (1.9,3)-- (1.9,1); % links (zeta=0) links
			\draw [line width=1pt] (2,1)-- (2,3); % links (zeta=0) rechts
			% Boegen:
			\draw [shift={(2,2)},line width=1pt]  plot[domain=-1.57:1.57,variable=\t]({1*1*cos(\t r)+0*1*sin(\t r)},{0*1*cos(\t r)+1*1*sin(\t r)}); % aeusserer Bogen
			\draw [shift={(2,2)},line width=1pt]  plot[domain=-1.57:1.57,variable=\t]({1*0.9*cos(\t r)+0*0.9*sin(\t r)},{0*0.9*cos(\t r)+1*0.9*sin(\t r)}); % innerer Bogen
			% schraege Linien:
			\draw [line width=1pt] (2,2.05)-- ({2 + cos(-0.4 r)},{2 + sin(-0.4 r)});
			\draw [line width=1pt] (2,1.95)-- ({2 + cos(-0.5 r)},{2 + sin(-0.5 r)});
			\draw [line width=1pt] (2,1.95)-- ({2 + cos(0.4 r)},{2 + sin(0.4 r)});
			\draw [line width=1pt] (2,2.05)-- ({2 + cos(0.5 r)},{2 + sin(0.5 r)});
			\draw [line width=1pt] (2,2.9)-- ({2 + .9*cos(-0.4 r)},{2 + .9*sin(-0.4 r)});

			% Beschriftung:
			\draw (1.7,2.95) node[anchor=east] {$\lambda=1$};
			\draw [dash pattern=on .5pt off 1.5pt,line width=.5pt] (1.8,2.95)-- (1.7,2.95);
			\draw (1.7,2) node[anchor=east] {$\lambda=\frac12$};
			\draw [dash pattern=on .5pt off 1.5pt,line width=.5pt] (1.8,2)-- (1.7,2);
			\draw (1.7,{2 + .9*sin(-0.45 r)}) node[anchor=east] {$\frac12(1\!-\!\alpha)$};
			\draw [dash pattern=on .5pt off 1.5pt,line width=.5pt] (1.8,{2 + .9*sin(-0.45 r)})-- (1.7,{2 + .9*sin(-0.45 r)});
			\draw (1.7,{2 + .9*sin(0.45 r)}) node[anchor=east] {$\frac12(1\!+\!\alpha)$};
			\draw [dash pattern=on .5pt off 1.5pt,line width=.5pt] (1.8,{2 + .9*sin(0.45 r)})-- (1.7,{2 + .9*sin(0.45 r)});
			\draw (1.7,1.05) node[anchor=east] {$\lambda=0$};
			\draw [dash pattern=on .5pt off 1.5pt,line width=.5pt] (1.8,1.05)-- (1.7,1.05);
			\draw (2.62,2.7) node[anchor=south west] {$|\zeta|=1$};
			\draw (1.97,2.99) node[anchor=south] {$\psi_+$};
			\draw (1.97,1.01) node[anchor=north] {$\psi_-$};
		\end{tikzpicture}\\
		\emph{Toy example with\\ $\alpha_n:=\alpha>0$, $\beta_n:=\beta$.}\end{small}
	\end{minipage}\hspace{0.04\linewidth}
	\begin{minipage}[c]{0.3\linewidth}
		\centering
		\begin{small}\begin{tikzpicture}[line cap=round,line join=round,>=triangle 45,x=1.8cm,y=1.8cm]
			% Fuellungen:
			\fill[fill=red,fill opacity=0.4] (1.9,3) -- (1.9,2.9) -- (2,2.9) -- (2,3) -- cycle; % (psi_+)
			\fill[fill=blue,fill opacity=0.6] (1.9,1) -- (1.9,1.1) -- (2,1.1) -- (2,1) -- cycle; % (psi_-)
			% (zeta=0):
			\fill[fill=red,fill opacity=0.4] (2,1.95) -- (1.9,1.95) -- (1.9,2.9) -- (2,2.9) -- cycle; % (zeta=0, 1/2<=lambda<1)
			\fill[fill=blue,fill opacity=0.6] (2,1.95) -- (1.9,1.95) -- (1.9,1.1) -- (2,1.1) -- cycle; % (zeta=0, 0<lambda<1/2)
			% (0<|zeta|<1):
			\fill [shift={(2,2)},fill=red,fill opacity=0.4]  (0,-0.05) --  plot[domain=-0.05:1.57,variable=\t]({1*0.9*cos(\t r)+0*0.9*sin(\t r)},{0*0.9*cos(\t r)+1*0.9*sin(\t r)}) -- cycle; % (0<|zeta|<1, 1/2<=lambda<1)
			\fill [shift={(2,2)},fill=blue,fill opacity=0.6]  (0,-0.05) --  plot[domain=-1.57:-0.05,variable=\t]({1*0.9*cos(\t r)+0*0.9*sin(\t r)},{0*0.9*cos(\t r)+1*0.9*sin(\t r)}) -- cycle; % (0<|zeta|<1, 0<lambda<1/2)
			\fill [shift={(2,2)},fill=red,fill opacity=0.4]  plot[domain=1.57:-0.05,variable=\t]({1*0.9*cos(\t r)+0*0.9*sin(\t r)},{0*0.9*cos(\t r)+1*0.9*sin(\t r)}) --  plot[domain=-0.05:1.57,variable=\t]({1*1*cos(\t r)+0*1*sin(\t r)},{0*1*cos(\t r)+1*1*sin(\t r)}) -- cycle; % (|zeta|=1, 1/2<=lambda<1)
			\fill [shift={(2,2)},fill=blue,fill opacity=0.6]  plot[domain=-0.05:-1.57,variable=\t]({1*0.9*cos(\t r)+0*0.9*sin(\t r)},{0*0.9*cos(\t r)+1*0.9*sin(\t r)}) --  plot[domain=-1.57:-0.05,variable=\t]({1*1*cos(\t r)+0*1*sin(\t r)},{0*1*cos(\t r)+1*1*sin(\t r)}) -- cycle; % (|zeta|=1, 0<lambda<1/2)

			% Waagrechte Linien:
			\draw [line width=1pt] (1.9,2.05)-- (2.99,2.05); % (lambda=1/2) oben
			\draw [line width=1pt] (1.9,1.95)-- (2.99,1.95); % (lambda=1/2) unten
			\draw [line width=1pt] (1.9,3)-- (2,3); % (lambda=1) oben (kurz)
			\draw [line width=1pt] (1.9,2.9)-- (2,2.9); % (lambda=1) unten (kurz)
			\draw [line width=1pt] (1.9,1.1)-- (2,1.1); % (lambda=0) oben (kurz)
			\draw [line width=1pt] (1.9,1)-- (2,1); % (lambda=0) unten (kurz)
			% Sekrechte Linien:
			\draw [line width=1pt] (1.9,3)-- (1.9,1); % links (zeta=0) links
			\draw [line width=1pt] (2,1)-- (2,3); % links (zeta=0) rechts
			% Boegen:
			\draw [shift={(2,2)},line width=1pt]  plot[domain=-1.57:1.57,variable=\t]({1*1*cos(\t r)+0*1*sin(\t r)},{0*1*cos(\t r)+1*1*sin(\t r)}); % aeusserer Bogen
			\draw [shift={(2,2)},line width=1pt]  plot[domain=-1.57:1.57,variable=\t]({1*0.9*cos(\t r)+0*0.9*sin(\t r)},{0*0.9*cos(\t r)+1*0.9*sin(\t r)}); % innerer Bogen

			% Beschriftung:
			\draw (1.7,2.95) node[anchor=east] {$\lambda=1$};
			\draw [dash pattern=on .5pt off 1.5pt,line width=.5pt] (1.8,2.95)-- (1.7,2.95);
			\draw (1.7,2) node[anchor=east] {$\lambda=\frac12$};
			\draw [dash pattern=on .5pt off 1.5pt,line width=.5pt] (1.8,2)-- (1.7,2);
			\draw (1.7,1.05) node[anchor=east] {$\lambda=0$};
			\draw [dash pattern=on .5pt off 1.5pt,line width=.5pt] (1.8,1.05)-- (1.7,1.05);
			\draw (2.62,2.7) node[anchor=south west] {$|\zeta|=1$};
			\draw (1.97,2.99) node[anchor=south] {$\psi_+$};
			\draw (1.97,1.01) node[anchor=north] {$\psi_-$};
		\end{tikzpicture}\\
		\emph{Baby maser, \ie\\ $\alpha_n:=0$, $\beta_n:=1$.}\end{small}
	\end{minipage}
	\caption[Invariant normal states]{Blue areas indicate that there is an invariant normal state for $T_\psi$, red areas indicate that there is no such state. Hatched areas are new compared with \cite[Fig.\,6.2]{BGKRSS13}.
	In case of the toy example, there are no \emph{pure} invariant states at 
	$\lambda = \frac 12(1 \pm \alpha)$. 
	Other regions for which we do not have any results are left blank.
	}\label{fig:summary}
\end{figure}

\medskip

To summarize, the regions of the parameter space for the \enquote{atomic} state $\psi$ where $T_\psi$ admits an {invariant normal state} 
and those where it does not are shown in Figure \ref{fig:summary}.

\subsection{Example: one-atom maser with random interaction time}

The evolution of an electromagnetic mode inside a 
perfect (no energy loss) cavity which interacts sequentially, 
according to the Jaynes-Cummings model, during time-intervals of length $\tau > 0$ 
with two-level atoms prepared in a state $\psi$ 
is described by the generalised one-atom maser 
with parameters $\alpha_n,\beta_n$ chosen as 
(cf. \cite[Ex.\,3.4]{BGKRSS13})
\begin{align}
\alpha_n(\tau) = \cos(g\tau\sqrt{n}) \ , \qquad 
\beta_n(\tau) = -\sin(g\tau\sqrt{n}) \ .
\end{align}
To make the dependence on $\tau$ explicit, we denote the 
corresponding transition operator by $T_{\psi,\tau}$. 
As shown in \cite[Thm.\,3.3]{bruneau09}, this model admits invariant states 
if $\psi$ is a \enquote{thermal state} (i.e. $\nu = 0$) 
with $\lambda < \frac 12$. 
If $\beta_n(\tau) \neq 0$ for all $n > 0$, referred to as 
the \enquote{non-resonant} case in \cite{bruneau09}, then 
the invariant state is unique, and is given by 
$\varphi(e_{n,m}) = \delta_{n,m}\cdot\frac{1-2\lambda}{1-\lambda} 
	\left(\frac\lambda{1-\lambda}\right)^n$. 
In this situation, the invariant state is also \emph{absorbing}, 
meaning that for all normal states $\theta$ and observables $x \in \Bl$ 
one has $\lim_n \theta(T_{\psi,\tau}^n(x)) = \varphi(x)$; 
see \cite[Thm.\,3.2]{bruneau13} (the notions 
\enquote{mixing}, defined there, and \enquote{absorbing} coincide for faithful states). 
However, the model exhibits pathological behaviour: 
For generic $\tau > 0$, the values of 
$|\beta_n(\tau)|$ come arbitrarily close to $0$. 
Nesting intervals, one can show: 
\begin{itemize}
\item For a fixed 
\emph{pure} atomic state $\psi$ with $\lambda < \frac 12$, 
the set of values of $\tau$, for which $T_{\psi,\tau}$ 
admits no \emph{pure} invariant state, is dense in $(0,\infty)$. 
\item For a fixed 
\emph{diagonal} atomic state $\psi$ with $\lambda < \frac 12$, 
the set of values of $\tau$, for which there is a normal state 
$\tilde\varphi$ with $\supp\tilde\varphi\leq p_{[0,m]}$ 
for some $m \in \N_0$ 
and $((T_{\psi,\tau})_*^n(\tilde\varphi))_{n\in\N_0}$ 
converging slower than geometrically towards the invariant state, 
is dense in $(0,\infty)$. 
\end{itemize}
It was suggested in \cite[Open\,problem\,3]{BJM13} that adding some 
randomness might cure the problems. 
Therefore, we consider here the model where 
$\tau$ is randomly distributed according to 
some probability measure $\rho$ on 
$[0,\infty)$. 
The evolution of the electromagnetic mode is then described by 
the ucp-map 
\begin{align}
T_{\psi,\rho}(x) := \int_{[0,\infty)} T_{\psi,\tau}(x) \, d\rho(\tau) \ , 
	\qquad x \in \Bl \ .
\end{align}
For a Hilbert space $\H$, denote by $\UCP(\BH)$ the 
set of ucp-maps on $\BH$. 
The following lemma will be useful: 

\begin{lemma}\label{lem:subharmonics-for-mixtures}
Let $\mu$ be a probability measure on a measure space $\Omega$, 
let $\H$ be a separable Hilbert space, 
and let $T: \Omega \to \UCP(\BH)$, $\omega \mapsto T_\omega$ 
be a function such that $\omega \mapsto \langle\xi,\, T_\omega(x)\xi\rangle$ 
is measurable for each $\xi \in \H$ and $x \in \BH$. 
A projection $p \in \BH$ is subharmonic for 
$\int_\Omega T_\omega \, d\mu(\omega)$ if and only if 
$p$ is subharmonic for $T_\omega$ for almost all $\omega \in \Omega$. 
\end{lemma}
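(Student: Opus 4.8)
The plan is to reduce the operator inequality defining subharmonicity to a scalar identity that passes cleanly through the integral. Throughout, a projection $p$ being \emph{subharmonic} for a ucp-map $S$ means $S(p) \geq p$, and I abbreviate $S := \int_\Omega T_\omega \, d\mu(\omega)$, so that $\langle \xi, S(p)\xi\rangle = \int_\Omega \langle \xi, T_\omega(p)\xi\rangle \, d\mu(\omega)$ for every $\xi \in \H$, the integrand being measurable by hypothesis. The key observation is a pointwise characterisation: for any ucp-map $T$ and any projection $p$ one has $T(p) \geq p$ if and only if $\langle \xi, T(p)\xi\rangle = \|\xi\|^2$ for all $\xi \in p\H$. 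Indeed, since $0 \leq p \leq \one$ and $T$ is positive and unital, $0 \leq T(p) \leq \one$, whence $\langle\xi, T(p)\xi\rangle \leq \|\xi\|^2$ for $\xi \in p\H$ always holds; if moreover $T(p)\geq p$, then also $\langle\xi, T(p)\xi\rangle \geq \langle\xi, p\xi\rangle = \|\xi\|^2$, forcing equality. Conversely, equality for all $\xi \in p\H$ says exactly $pT(p)p = p$; decomposing with respect to $\H = p\H \oplus p^\perp\H$ and writing $T(p) = \left(\begin{smallmatrix} \one & B \\ B^* & C\end{smallmatrix}\right)$, positivity of $\one - T(p) = \left(\begin{smallmatrix} 0 & -B \\ -B^* & \one - C\end{smallmatrix}\right)$ forces $B = 0$ (a positive block matrix with a vanishing diagonal corner has vanishing off-diagonal blocks), so that $T(p) - p = \left(\begin{smallmatrix} 0 & 0 \\ 0 & C\end{smallmatrix}\right) \geq 0$.

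Granting this characterisation, the \enquote{if} direction is immediate: if $T_\omega(p) \geq p$ for almost all $\omega$, integrating the inequality yields $S(p) = \int_\Omega T_\omega(p)\, d\mu(\omega) \geq \int_\Omega p \, d\mu(\omega) = p$. For the \enquote{only if} direction, suppose $S(p) \geq p$ and fix $\xi \in p\H$. On one hand $\langle\xi, T_\omega(p)\xi\rangle \leq \|\xi\|^2$ for every $\omega$; on the other hand $\int_\Omega \langle\xi, T_\omega(p)\xi\rangle\, d\mu(\omega) = \langle\xi, S(p)\xi\rangle \geq \|\xi\|^2$. Hence the non-negative integrand $\|\xi\|^2 - \langle\xi, T_\omega(p)\xi\rangle$ has vanishing integral, so $\langle\xi, T_\omega(p)\xi\rangle = \|\xi\|^2$ for $\mu$-almost all $\omega$ (with the exceptional null set depending on $\xi$).

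It then remains to promote this to a single co-null set valid for all $\xi \in p\H$ simultaneously, and here separability of $\H$ enters. Choosing a countable dense subset $\{\xi_j\}_j$ of $p\H$ and, for each $j$, a $\mu$-null set $N_j$ outside of which $\langle\xi_j, T_\omega(p)\xi_j\rangle = \|\xi_j\|^2$, the union $N := \bigcup_j N_j$ is still $\mu$-null. For $\omega \notin N$ the identity holds for all $\xi_j$; since for fixed $\omega$ the quadratic form $\xi \mapsto \langle\xi, T_\omega(p)\xi\rangle$ is continuous on $p\H$ (as $\|T_\omega(p)\| \leq 1$) and agrees with $\xi \mapsto \|\xi\|^2$ on a dense subset, the two agree on all of $p\H$. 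By the pointwise characterisation this gives $T_\omega(p) \geq p$ for every $\omega \notin N$, i.e.\ $p$ is subharmonic for $T_\omega$ for almost all $\omega$.

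I expect the genuine work to lie in two places. The conceptual crux is the characterisation $T(p)\geq p \Leftrightarrow pT(p)p = p$, which recasts an order relation as the saturation of the trivial bound $T(p)\leq\one$ and is precisely what makes the condition \enquote{affine} enough to survive averaging. The measure-theoretic crux is the interchange of quantifiers from \enquote{for each $\xi$, almost every $\omega$} to \enquote{almost every $\omega$, for all $\xi$}, which is exactly why separability is assumed, as it reduces the exceptional set to a countable union of null sets. Both steps are otherwise routine, and no properties of the specific maser maps $T_{\psi,\tau}$ are needed.
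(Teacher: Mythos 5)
Your proof is correct and follows essentially the same route as the paper: your characterisation $pT(p)p = p$ is, by unitality, literally the paper's condition $pT(p^\perp)p = 0$, and both arguments then combine the almost-everywhere vanishing of a nonnegative scalar integrand (zero integral of a nonnegative function) with separability of $\H$ to upgrade the $\xi$-dependent null sets to a single one. The only cosmetic difference is that you prove the saturation characterisation in full (via the block-matrix positivity argument) and spell out the countable-dense-set step, both of which the paper compresses into \enquote{recall that} and \enquote{as $\H$ is separable}.
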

\begin{proof}
Recall that a projection $p$ is subharmonic for a ucp-map $T$ 
if and only if $T(p) \geq p$, if and only if 
$T(p^\bot) \leq p^\bot$, if and only if $pT(p^\bot)p = 0$. 
For each $\xi \in \H$ we have 
$\langle p\xi,\,\int_\Omega T_\omega(p^\bot) \, d\mu(\omega) p \xi \rangle
\stackrel{\text{(by def)}}= \int_\Omega \langle\xi,\, pT_\omega(p^\bot)p \xi \rangle \, d\mu(\omega) = 0$ 
if and only if $pT_\omega(p^\bot)p \xi = 0$ for almost all $\omega$, 
since $pT_\omega(p^\bot)p \geq 0$. 
As $\H$ is separable, we have 
$p\int_\Omega T_\omega(p^\bot) \, d\mu(\omega) p = 0$ 
if and only if $pT_\omega(p^\bot)p = 0$ for almost all $\omega$. 
\end{proof}

\begin{prop}
\label{prop:random-JC}
Let $\rho = D_\rho(\tau)d\tau$ be a probability measure 
with density $D_\rho \in C^1([0,\infty))$ such that 
$\frac{d}{d\tau}D_\rho \in L^1([0,\infty))$. Let 
$0\leq \lambda \leq 1$. Then: 
\begin{enumerate}
\item $T_{\psi,\rho}$ has a normal invariant state if $\lambda < \frac 12$; 
	if $\lambda > \frac 12$, then $T_{\psi,\rho}$ has no normal invariant state. 
\item If $\lambda < \frac 12$ and $\psi$ is faithful, 
	then the normal invariant state for $T_{\psi,\rho}$ is absorbing. 
\end{enumerate}
\end{prop}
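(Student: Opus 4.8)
The plan is to treat the two parts separately: Part~1 is deduced from Theorem~\ref{thm:main-result} after computing the asymptotics of the averaged transition rates, while Part~2 combines faithfulness of the invariant state with an irreducibility/aperiodicity analysis anchored on Lemma~\ref{lem:subharmonics-for-mixtures}.

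\emph{Part 1.} Since each transition rate is a weak-$*$ continuous linear functional of the map, those of $T_{\psi,\rho}$ are the $\rho$-averages of those of $T_{\psi,\tau}$. Inserting the values from Proposition~\ref{cor:main-result} gives
\[
\lambda_n = \lambda\!\int\!\sin^2(g\tau\sqrt{n+1})\,d\rho,\quad
\mu_n = (1-\lambda)\!\int\!\sin^2(g\tau\sqrt{n})\,d\rho,\quad
\eta_n = \bar\nu\!\int\!\cos(g\tau\sqrt n)\sin(g\tau\sqrt{n+1})\,d\rho.
\]
The crucial input is a Riemann--Lebesgue estimate: integrating by parts and using $D_\rho\in C^1$ with $D_\rho'\in L^1$ (so $D_\rho(\tau)\to0$), every integral $\int\cos(\omega\tau)D_\rho(\tau)\,d\tau$ tends to $0$ as $\omega\to\infty$. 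Writing $\sin^2=\tfrac12(1-\cos)$ and $\cos(g\tau\sqrt n)\sin(g\tau\sqrt{n+1})=\tfrac12[\sin(g\tau(\sqrt{n+1}+\sqrt n))+\sin(g\tau(\sqrt{n+1}-\sqrt n))]$, this yields $\lambda_n\to\lambda/2$, $\mu_n\to(1-\lambda)/2$ and $\eta_n\to0$ (for the low-frequency sine in $\eta_n$, whose frequency $g(\sqrt{n+1}-\sqrt n)\to0$, one uses dominated convergence). A key simplification is that the two $\sin^2$-integrals cancel in $\lambda_{n-1}/\mu_n=\lambda/(1-\lambda)$, so $\pi_n=(\lambda/(1-\lambda))^n$ exactly and $\kappa=\ln\frac{1-\lambda}\lambda$, which is $>0$ precisely when $\lambda<\tfrac12$. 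As $\eta_n\to0$ while $\lambda_n\mu_{n+1}\to\lambda(1-\lambda)/4>0$, the left-hand side of \eqref{eq:main-ineq} diverges to $+\infty$; hence for $0<\lambda<\tfrac12$ Theorem~\ref{thm:main-result}.1 applies (the rates are nonzero since $D_\rho>0$ and $\sin^2$ vanishes only on a null set), the case $\lambda=0$ being trivial with $e_{0,0}$ invariant. For $\lambda>\tfrac12$ one has $\lambda_n-\mu_n\to\lambda-\tfrac12>0$, so $\mu_n<\lambda_n$ eventually and the left-hand side of \eqref{eq:main-ineq-neg} again diverges, so Theorem~\ref{thm:main-result}.2 gives non-existence.

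\emph{Part 2.} Let $\varphi$ be the invariant state from Part~1. First I would show it is faithful. Its support projection $q$ is subharmonic, $T_{\psi,\rho}(q)\ge q$ (from $\varphi(T(q^\bot))=\varphi(q^\bot)=0$ and positivity of $\varphi$ and $T(q^\bot)$), so by Lemma~\ref{lem:subharmonics-for-mixtures} $q$ is subharmonic for $T_{\psi,\tau}$ for almost every $\tau$. Choosing a single $\tau_0$ in this full-measure set for which in addition all $\sin(g\tau_0\sqrt n)$ and $\cos(g\tau_0\sqrt n)$ are nonzero (again a full-measure condition), the map $T_{\psi,\tau_0}$ has all nearest-neighbour rates $\lambda_n,\mu_n,\eta_n$ nonzero, and a connectivity argument along the birth--death graph then forces its only subharmonic projections to be $0$ and $\one$; since $q\neq0$, we get $q=\one$. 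The same reduction yields irreducibility of $T_{\psi,\rho}$. For the absorbing property I would invoke that, for a ucp map with a faithful normal invariant state, absorbing is equivalent to primitivity, i.e. irreducibility together with aperiodicity (triviality of the peripheral/reversible part); this is where ``mixing'' and ``absorbing'' coincide, as for \cite[Thm.\,3.2]{bruneau13}. Aperiodicity is the remaining point: a nontrivial cyclic decomposition $\one=p_0+\dots+p_{d-1}$ ($d\ge2$) permuted by $T_{\psi,\rho}$ would make each $p_i$ subharmonic for $T_{\psi,\rho}^{\,d}=\int T_{\tau_1}\!\cdots T_{\tau_d}\,d\rho^{\otimes d}$; applying Lemma~\ref{lem:subharmonics-for-mixtures} to this mixture over $[0,\infty)^d$ and a generic tuple reduces the question to a fixed composition, where the strictly positive self-loops $\sigma_n\to\tfrac12$ forbid a period $d\ge2$. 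Hence $d=1$ and $\varphi$ is absorbing; in the diagonal case $\nu=0$ this recovers \cite[Thm.\,3.2]{bruneau13}.

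\emph{Main obstacle.} I expect the delicate step to be the aperiodicity part of the absorbing statement: controlling the peripheral spectrum of an infinite-dimensional ucp map and, in particular, transferring the known diagonal ($\nu=0$) convergence to the off-diagonal coherences. The whole point of Lemma~\ref{lem:subharmonics-for-mixtures} is to make this tractable, by trading the average over $\tau$ for a single well-behaved interaction time $\tau_0$ at which all nearest-neighbour rates survive.
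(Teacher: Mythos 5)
Your Part 1 is correct and essentially identical to the paper's argument: compute the $\rho$-averaged rates, observe the exact cancellation $\lambda_{n-1}/\mu_n=\lambda/(1-\lambda)$ (so $\kappa=\ln\frac{1-\lambda}{\lambda}$, positive iff $\lambda<\frac12$), kill $\eta_n$ by integration by parts using $D_\rho\in C^1$, $\frac{d}{d\tau}D_\rho\in L^1$, and feed the resulting asymptotics into both parts of Theorem~\ref{thm:main-result}. (A small quibble: you do not need, and are not given, $D_\rho>0$; nonnegativity with $\int D_\rho\,d\tau=1$ already forces $\int\sin^2(g\tau\sqrt n)\,d\rho>0$, since $\sin^2$ vanishes only on a null set.)

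Part 2 is where you genuinely diverge from the paper, and where there is a real gap. You reduce ``absorbing'' to ``irreducible $+$ aperiodic'', asserting that for a ucp map on $\Bl$ with a faithful normal invariant state this equivalence holds. In infinite dimensions this is not an off-the-shelf fact: it is a Jacobs--de Leeuw--Glicksberg-type theorem (mixing $\Leftrightarrow$ triviality of the reversible/almost-periodic subalgebra), whose only source the paper knows of is the German-language thesis \cite{Haag06}; this is exactly why the paper proves its own substitute in the appendix (Theorem~\ref{thm:convex-comb}, Proposition~\ref{prop:convex-comb}). More seriously, even granting such an equivalence, your verification of aperiodicity is insufficient: ruling out finite cyclic decompositions $\one=p_0+\dots+p_{d-1}$ does not make the peripheral/reversible part trivial, since for an infinite-dimensional ucp map the peripheral point spectrum can be a dense (infinite) subgroup of the circle, or the reversible subalgebra can be a noncommutative algebra on which $T$ acts as an automorphism, with no finite cyclic partition in sight. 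Your ``positive self-loops $\sigma_n$ forbid period $d\geq2$'' step is also a classical argument: it requires some cyclic projection to dominate a basis projection $e_{n,n}$, which quantum cyclic projections need not do. The paper's route is much softer and avoids all spectral analysis: irreducibility via Lemma~\ref{lem:subharmonics-for-mixtures} (which you also have) makes the invariant state from Part 1 faithful; then, since the atomic state $\psi$ is faithful, its density matrix dominates a positive multiple of that of the ground state $\psi_-$, so by affinity of $\psi\mapsto T_{\psi,\rho}$ one writes $T_{\psi,\rho}=c\,T_{\psi_-,\rho}+(1-c)S$ with $c>0$; the vacuum state $x\mapsto\skapro{xe_0,e_0}$ is absorbing for $T_{\psi_-,\rho}$, and Theorem~\ref{thm:convex-comb} (or Proposition~\ref{prop:convex-comb}) transfers the absorbing property from the component $T_{\psi_-,\rho}$ to the convex combination $T_{\psi,\rho}$. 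To salvage your route you would have to prove (or precisely cite) the mixing criterion you invoke and give an aperiodicity argument controlling the full peripheral structure, not just finite cyclic partitions; as written, the decisive step of Part 2 is missing.
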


\begin{proof}
First, the parameters $\mu_n,\lambda_n,\eta_n$ of $T_{\psi,\rho}$ 
are given by 
\begin{align}\label{eq:bnd-rates-for-smeared-mm}
\mu_n &= \int_{(0,\infty)} \Tr(e_{n,n}T_{\psi,\tau}(e_{n-1,n-1})) \, d\rho(\tau) 
	= (1-\lambda) \int_{(0,\infty)} \sin^2(g\tau\sqrt n) \, d\rho(\tau) \ , \nonumber\\
\lambda_n &= \int_{(0,\infty)} \Tr(e_{n,n}T_{\psi,\tau}(e_{n+1,n+1})) \, d\rho(\tau) 
	= \lambda \int_{(0,\infty)} \sin^2(g\tau\sqrt{n+1}) \, d\rho(\tau) \ , \\
\eta_n &= \int_{(0,\infty)} \Tr(e_{n,n+1}T_{\psi,\tau}(e_{n,n})) \, d\rho(\tau) 
	= \bar\nu \int_{(0,\infty)} \sin(g\tau\sqrt{n+1})\cos(g\tau\sqrt{n+1}) \, d\rho(\tau) \nonumber\\
	&= \frac{\bar\nu}2 \int_{(0,\infty)} \sin(2g\tau\sqrt{n+1}) \, d\rho(\tau) \ , \nonumber
\end{align}
cf. Figure \ref{fig:action}. 
As $n \to \infty$, they approach the limits $\lim_n\mu_n = \frac 12(1-\lambda)$, 
$\lim_n \lambda_n = \frac\lambda 2$ 
and $\lim_n \eta_n = 0$: for example, 
\begin{align}
\left| \frac 2{\bar\nu}\eta_n \right| = 
&\left|\int_{(0,\infty)} \sin(2g\tau\sqrt{n+1}) \, d\rho(\tau) \right|
	\leq \frac 1{2g\sqrt{n+1}}\biggl\{ 
	\underbrace{\left|\left.\cos(2g\tau\sqrt{n+1})D_\rho(\tau)\right|_{0}^\infty\right|}_{\leq D_\rho(0)} \nonumber\\
	&\hspace{6em}+ \underbrace{\int_{(0,\infty)} |\cos(2g\tau\sqrt{n+1}) (\frac{d}{d\tau}D_\rho)(\tau)|\,d\tau}_{\leq \|\frac{d}{d\tau}D_\rho\|_1} 
	\biggr\} 
	\quad\stackrel{n\to\infty}\longrightarrow\quad 0 \ .
\end{align}
An application of Theorem \ref{thm:main-result} along the 
lines of the proof of Proposition \ref{cor:main-result} 
shows part 1. 

\medskip

For part 2, observe that $T_{\psi,\rho}$ is irreducible by 
Lemma \ref{lem:subharmonics-for-mixtures}, since 
almost all $T_{\psi,\tau}$ are irreducible. 
Hence, the invariant state for $T_{\psi,\rho}$, 
existing thanks to part 1, is faithful. 
Moreover, if $\psi_-$ is the state on $M_2$ corresponding 
to $\lambda = 0$, then $T_{\psi,\rho}$ is a non-trivial convex combination 
of $T_{\psi_-,\rho}$ with some other ucp-map. 
As the invariant state $x \mapsto \skapro{xe_0,e_0}$ 
of $T_{\psi_-,\rho}$ is absorbing, 
$T_{\psi,\rho}$ has an absorbing state by Theorem \ref{thm:convex-comb} 
or by Proposition \ref{prop:convex-comb}. 
\end{proof}

\appendix

\subsection*{Appendix: Approach to equilibrium}

A normal invariant state $\varphi$ 
is called \emph{absorbing} for a ucp-map $T$, if for each normal 
state $\theta$ and for all observables $x$ 
we have $\lim_n \theta(T^n(x)) = \varphi(x)$. 
The following result is used in the proof of 
Proposition \ref{prop:random-JC} above: 

\begin{thm}[{\cite[Kor.\,2.2.11\,\&\,Satz\,2.4.9]{Haag06}}]
\label{thm:convex-comb}
Let $R, S$ be ucp-maps on $\B(\H)$, and 
let $T = \lambda R + (1-\lambda) S$ for some 
$0 < \lambda \leq 1$. Suppose that $T$ 
admits a faithful invariant normal state. 
Then if $R$ has an absorbing state, so does $T$. 
\end{thm}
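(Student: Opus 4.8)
The plan is to show that the given faithful invariant normal state $\varphi$ of $T$ is itself absorbing. This is the only possible candidate anyway: an absorbing state is automatically the unique invariant normal state, since for any invariant normal $\psi$ one has $\psi(x) = \psi(T^n(x)) \to \varphi(x)$. The argument has two parts: first control the eigenoperators of $T$ belonging to eigenvalues of modulus one, and then invoke a general mixing criterion for ucp-maps with a faithful invariant state.

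First I would record what the absorbing state $\varphi_R$ of $R$ provides. If $R(u) = \gamma u$ for some $u \in \BH$ with $|\gamma| = 1$, then $R^n(u) = \gamma^n u \to \varphi_R(u)\one$ $\sigma$-weakly. Testing against normal states gives $\gamma^n\theta(u) \to \varphi_R(u)$ for every normal $\theta$; hence either $u = 0$, or $\gamma^n$ converges, which for $|\gamma| = 1$ forces $\gamma = 1$ and then $u = \varphi_R(u)\one$. Thus $R$ has no nonscalar eigenoperator for a unimodular eigenvalue, i.e.\ its peripheral (reversible) subalgebra is $\C\one$.

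The key step transfers this to $T$ via the convex decomposition and faithfulness of $\varphi$. Suppose $T(u) = \gamma u$ with $|\gamma| = 1$. Since $T$ is $2$-positive and $\varphi$ is faithful and $T$-invariant, the Schwarz inequality $T(u^*u) \geq T(u)^*T(u) = u^*u$ combined with $\varphi(T(u^*u) - u^*u) = 0$ yields $T(u^*u) = T(u)^*T(u)$, so $u$ lies in the multiplicative domain of $T$. Writing $a := R(u)$, $b := S(u)$, the Schwarz inequalities $R(u^*u) \geq a^*a$, $S(u^*u) \geq b^*b$ and operator convexity of $x \mapsto x^*x$ give
\begin{align}
T(u)^*T(u) &= T(u^*u) = \lambda R(u^*u) + (1-\lambda)S(u^*u) \nonumber\\
&\geq \lambda\, a^*a + (1-\lambda)\, b^*b
\geq \bigl(\lambda a + (1-\lambda) b\bigr)^*\bigl(\lambda a + (1-\lambda) b\bigr) = T(u)^*T(u) \ ,
\end{align}
where the last equality uses $\lambda a + (1-\lambda) b = T(u)$. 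Hence all inequalities are equalities, and since $\lambda\, a^*a + (1-\lambda)\, b^*b - (\lambda a + (1-\lambda) b)^*(\lambda a + (1-\lambda) b) = \lambda(1-\lambda)(a-b)^*(a-b)$, we get $(a-b)^*(a-b) = 0$, i.e.\ $R(u) = S(u) = T(u) = \gamma u$. So every unimodular eigenoperator of $T$ is one of $R$, hence scalar with $\gamma = 1$ by the previous paragraph; the peripheral subalgebra of $T$ is $\C\one$.

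Finally I would invoke the mixing machinery for a ucp-map with a faithful invariant normal state: when the peripheral subalgebra reduces to $\C\one$, the powers $T^n(x)$ converge $\sigma$-weakly to the $\varphi$-preserving conditional expectation onto the fixed points, which is $\varphi(x)\one$. This gives $\lim_n \theta(T^n(x)) = \varphi(x)$ for every normal state $\theta$ and every $x$, so $\varphi$ is absorbing. The main obstacle is precisely this last step: passing from triviality of the peripheral point spectrum to genuine (not merely Ces\`aro) convergence needs the Jacobs--de Leeuw--Glicksberg splitting of $\BH$ into reversible and flight parts, relative $\sigma$-weak compactness of the bounded orbits $\{T^n(x)\}$, and the group structure of the limit points. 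This is the substantive external input, supplied in the present setting by the cited results of \cite{Haag06}.
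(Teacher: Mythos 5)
Your argument is sound in substance, but it takes a genuinely different route from the paper -- and it is worth being precise about what the paper actually does. The paper never proves Theorem \ref{thm:convex-comb} itself: it cites \cite{Haag06} for it, and, explicitly to avoid depending on that (German-language) source, proves instead the quantitative variant Proposition \ref{prop:convex-comb} by an elementary probabilistic argument: expand $T^N$ as the $\mu$-average of the compositions $T^{\vec i}$ over words $\vec i \in \{0,1\}^N$ in $R$ and $S$, note that the words lacking a run of $r$ consecutive $R$'s carry total weight at most $(1-\lambda^r)^{\lfloor N/r \rfloor}$, and on the words containing such a run use the quantitative absorbing rate of $R$, together with domination $\theta \leq C_1\varphi$ (faithfulness) and the exponential fall-off of $\varphi$, to make the tails cancel; optimizing $r$ and the cut-off $M$ in $N$ yields explicit polynomial rates, at the price of stronger hypotheses. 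Your route is spectral instead: the computation showing that a unimodular eigenoperator $u$ of $T$ satisfies $R(u)=S(u)=\gamma u$ is correct and elegant (faithfulness saturates the Schwarz inequality, and then $\lambda(1-\lambda)(a-b)^*(a-b)=0$ forces $a=b$; the boundary case $\lambda=1$ is trivial since then $T=R$), as is the observation that an absorbing state for $R$ kills its peripheral point spectrum. What your approach buys is the full qualitative statement, with no fall-off or rate assumptions; what it costs is the final step, which you correctly flag as the substantive input: the Jacobs--de~Leeuw--Glicksberg-type criterion that trivial peripheral point spectrum plus a faithful normal invariant state implies absorption. Since that criterion is essentially the content of the cited \cite[Kor.\,2.2.11]{Haag06}, your argument is a legitimate reduction of the convex-combination statement to that criterion rather than a self-contained proof -- i.e., it reconstructs the dependence that the paper's quantitative detour was designed to eliminate.

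One technical caveat on your last step: relative $\sigma$-weak compactness of the bounded orbits $\{T^n(x)\}$ in $\B(\H)$ is automatic by Alaoglu and is \emph{not} the compactness the splitting theorem needs ($\B(\H)$ is not reflexive, so weak-$*$ compactness does not yield a de~Leeuw--Glicksberg decomposition). The splitting must be carried out on the predual, or on the GNS space of $\varphi$, where relative \emph{weak} compactness of orbits is not automatic; it is obtained from the faithful invariant state a second time, via weak compactness of the order intervals $\{\theta : 0 \leq \theta \leq C\varphi\}$ (which are $T_*$-invariant) and their norm-density in the normal states. One also needs the correspondence, again via faithfulness, between peripheral eigenvectors of $T$ on $\B(\H)$ and those of the predual (or $L^2(\varphi)$) extension. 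These points are all handled by the machinery you cite, but your description of the needed ingredient should be corrected accordingly.
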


As the proof given in \cite{Haag06} is available in German language 
only, and as the author is unaware of another reference 
for this statement, we will here give a quantitative 
variant of Theorem \ref{thm:convex-comb}. 
This allows to give a slightly shorter proof, is 
sufficient for the purposes above, and might be of 
independent interest. 

Let $g$ be a function $\N \to \R$. 
A function $f: \N \to \R$ is said to be of order $g$, written 
$f \in \O(g)$, if there is $C > 0$ such that 
$|f(n)| < C g(n)$ for all $n \in \N$. 

\begin{prop}
\label{prop:convex-comb}
Let $R, S$ be ucp-maps on $\Bl$, and 
let $T = \lambda R + (1-\lambda) S$ for some 
$0 < \lambda < 1$. Moreover, suppose that $T$ 
admits a faithful normal invariant state $\varphi$ 
of exponential fall-off: 
$\sum_{k\geq n}\varphi(e_{k,k}) \in \O(e^{-\gamma_2n})$. 
Then if $R$ has an absorbing state $\varphi_R$ such that 
$\|\theta\circ R^n - \varphi_R \circ \textnormal{Ad}_{p_{[0,m]}}\| 
	\leq e^{\gamma_0m - \gamma_1n}$ 
holds for all normal states $\theta$ with $\supp\theta \leq p_{[0,m]}$, $m \in \N$, 
then $T$ has an absorbing state $\varphi_T$ such that 
$\|\theta\circ T^n - \varphi_T\| \in \O(n^{-\gamma(a)})$, 
where 
$\gamma(a) = \frac{\gamma_1\gamma_2}{-a\ln\lambda\cdot(\gamma_0 + \gamma_2)}$, 
holds for all $a > 1$ and for all 
$\theta$ with $\supp\theta \leq p_{[0,m]}$ for some $m \in \N$. 
\end{prop}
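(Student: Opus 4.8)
The plan is to prove a quantitative \emph{mixing} estimate and deduce convergence from it. Since an absorbing state is the unique normal invariant state (any normal invariant $\psi$ satisfies $\psi=\psi\circ T^n\to\varphi_T$), and $\varphi$ is normal invariant, we necessarily have $\varphi_T=\varphi$; so the task reduces to showing $\norm{\theta_1\circ T^n-\theta_2\circ T^n}\in\O(n^{-\gamma(a)})$ for suitable pairs of functionals and then taking $\theta_2=\theta\circ T^r$ to exhibit a Cauchy sequence. The preliminary I would establish first is a \emph{domination principle}: if a normal positive functional $\psi$ obeys $\psi\le c\varphi$, then $\psi\circ T\le c\varphi$ as well, because $T$ is positive and $\varphi$ is invariant; iterating gives $\psi\circ T^k\le c\varphi$ for all $k$, whence the exponential fall-off of $\varphi$ yields the \emph{uniform} tail bound $(\psi\circ T^k)(p_{[0,M]}^\bot)\le C\,e^{-\gamma_2 M}$. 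Every finitely supported state is dominated by a multiple of the faithful state $\varphi$, and this domination (hence the tail bound) persists along the whole $T$-orbit.

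The heart of the argument is a \emph{last-renewal decomposition}. Fixing a block length $j_0$ and writing $n=s+Lj_0$, put $p:=\lambda^{j_0}$ and split the one-block map $V:=T^{j_0}=\lambda^{j_0}R^{j_0}+\bigl(T^{j_0}-\lambda^{j_0}R^{j_0}\bigr)=:pP+(1-p)Q$, where $P:=R^{j_0}$ and $Q$ is again ucp. The standard last-renewal identity for $V^L$, composed with $T^s$, gives
\[
T^n=(1-p)^L\,T^s Q^L+\sum_{b=1}^{L}p\,T^{\,s+(b-1)j_0}\,R^{j_0}\,(1-p)^{L-b}Q^{L-b},
\]
where juxtaposition denotes composition in the order applied to the state. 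The decisive feature is that each prefix $T^sV^{b-1}=T^{\,s+(b-1)j_0}$ is an \emph{honest power of $T$}, so that the functionals $\vartheta_{i,b}:=\theta_i\circ T^{\,s+(b-1)j_0}$ inherit the domination $\le c\varphi$ and the uniform tail bound. This is exactly what would fail if one expanded $T^n$ into arbitrary words in $R,S$, since $\varphi$ is invariant for $T$ but not for $R$ or $S$ individually.

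Applying the decomposition to $\theta_1-\theta_2$: the $Q^L$-term is bounded by $(1-p)^L\norm{\theta_1-\theta_2}$ because $Q^L$ has a trace-norm-contractive predual, and in the $b$-th renewal term the trailing $(1-p)^{L-b}Q^{L-b}$ is likewise a contraction, so it remains to estimate $\norm{(\vartheta_{1,b}-\vartheta_{2,b})\circ R^{j_0}}$. Truncating each $\vartheta_{i,b}$ at level $M$ (the truncation error being controlled by the uniform tail mass $\O(e^{-\gamma_2 M})$, with the off-diagonal parts handled by Cauchy--Schwarz) and applying the hypothesis on $R$ to the normalized truncations shows that \emph{both} $\vartheta_{i,b}\circ R^{j_0}$ lie within $\O(e^{\gamma_0 M-\gamma_1 j_0})$ of the \emph{same} functional $\varphi_R\circ\Ad_{p_{[0,M]}}$; hence their difference is $\O\bigl(e^{\gamma_0 M-\gamma_1 j_0}+e^{-\gamma_2 M}\bigr)$ — the long $R$-run makes the two orbits forget their initial data. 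Summing with $\sum_{b}p(1-p)^{L-b}\le1$ collects these without growth, giving $\norm{\theta_1\circ T^n-\theta_2\circ T^n}\in\O\bigl((1-\lambda^{j_0})^L+e^{\gamma_0 M-\gamma_1 j_0}+e^{-\gamma_2 M}\bigr)$.

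Finally I would optimize the two length scales. Choosing $M=\rho j_0$ balances the reset error $e^{\gamma_0 M-\gamma_1 j_0}$ against the fall-off error $e^{-\gamma_2 M}$, the optimal $\rho=\gamma_1/(\gamma_0+\gamma_2)$ producing the combination $\gamma_0+\gamma_2$; and taking $j_0=\frac{\ln n}{a(-\ln\lambda)}$ (so $L=\lfloor n/j_0\rfloor$) makes $\lambda^{j_0}=n^{-1/a}$, whence the renewal-miss term $(1-\lambda^{j_0})^L\approx e^{-n^{1-1/a}/j_0}$ is smaller than every power of $n$, while the surviving terms decay like $n^{-\gamma(a)}$ with $\gamma(a)=\frac{\gamma_1\gamma_2}{-a\ln\lambda\,(\gamma_0+\gamma_2)}$. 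Cauchyness then produces the limit $\varphi_T$ with $\norm{\theta\circ T^n-\varphi_T}\in\O(n^{-\gamma(a)})$; $\varphi_T$ is $T$-invariant, so $\varphi_T=\varphi$, and absorption of arbitrary normal states follows by norm approximation and contractivity of $T^n_*$. The main obstacle is precisely the tail control: the hypothesis on $R$ charges $e^{\gamma_0 M}$ for truncating at level $M$, so the support cannot be allowed to spread freely along the orbit — this is why the renewal decomposition must be arranged to keep genuine powers of $T$ as prefixes, letting the invariant state $\varphi$ dominate and tame the tails, and it is the enforced trade-off between the run length $j_0$ (needed for a good reset) and the renewal frequency $n/j_0$ that degrades the rate from exponential to polynomial.
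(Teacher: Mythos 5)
Your proposal is correct and takes essentially the same route as the paper's own proof: a Bernoulli decomposition of $T^n$ over words in $R,S$ that isolates a long run of $R$'s (your last-renewal block identity for $V=T^{j_0}=\lambda^{j_0}R^{j_0}+(1-\lambda^{j_0})Q$ is a repackaging of the paper's sum over words containing a run of $r$ zeros), domination $\theta\circ T^k\leq C_1\varphi$ via faithfulness and invariance to get uniform tail bounds, truncation at level $M$ with Cauchy--Schwarz for the off-corner terms, the quantitative reset hypothesis on $R$ along the run, and the identical optimization $r\sim\frac{\ln n}{-a\ln\lambda}$, $M\sim\frac{\gamma_1}{\gamma_0+\gamma_2}\,r$ yielding the same exponent $\gamma(a)$. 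The only difference is organizational: you hunt for the $R$-run at block granularity $j_0$ rather than letter-by-letter, which if anything makes the bookkeeping (keeping genuine powers of $T$ as prefixes, and weights summing to at most $1$) cleaner than the paper's word-level decomposition.
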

\begin{proof}(along the lines of the proof of \cite[Satz\,2.4.9]{Haag06})
The idea is to show that 
$\sup_{N_1,N_2\geq N} \left\| \theta \circ T^{N_1} - \theta \circ T^{N_2} \right\| 
	\in \O(N^{-\gamma(a)})$. 

For words $\vec i$ over $\{0,1\}$, recursively define 
ucp-maps $T^{\vec i}$ by putting 
$T^\emptyset := \id$, $T^{\vec i 0} := T^{\vec i} \circ R$ 
and $T^{\vec i 1} := T^{\vec i} \circ S$. 
Let $\mu$ be the probability measure on $\{0,1\}^N$ with 
$\mu(\{\vec i\}) := \lambda^n(1-\lambda)^{N-n}$, if $\vec i$ is 
a word of length $N$ containing the letter \enquote{$0$} $n$-times. 
Then we have 
$\theta \circ T^N = \sum_{\vec i \in \{0,1\}^N} 
	\mu(\{\vec i\}) \cdot \theta\circ T^{\vec i}$. 
Denote by $\mathfrak R^N_r \subseteq \{0,1\}^n$ the set of 
words containing at least one run of $r$ consecutive $0$'s. 
We have $1-\mu(\mathfrak R^N_r) \leq (1-\lambda^r)^{\lfloor \frac Nr \rfloor}$ 
(divide $\{1,2,\ldots,N\}$ into $\lfloor \frac Nr \rfloor$ 
blocks of length $r$ and only count runs fitting into 
one of these blocks). 
Hence, for $N,N_1,N_2 \in \N$ with $N_1,N_2 \geq N$, 
\begin{align}
\left\| \theta \circ T^{N_1} - \theta \circ T^{N_2} \right\| &\leq 
\left\| \sum_{\vec i \in \mathfrak R^{N}_r} \mu(\{\vec i\}) 
	\left(\theta \circ T^{N_1-N} \circ T^{\vec i} 
	- \theta \circ T^{N_2-N} \circ T^{\vec i} \right) \right\| \nonumber\\
	&\quad + \underbrace{\left\| \sum_{\vec i \notin \mathfrak R^{N}_r} 
		\mu(\{\vec i\}) \left( \theta \circ T^{N_1-N} \circ T^{\vec i} 
	- \theta \circ T^{N_2-N} \circ T^{\vec i} \right) \right\|}_{\leq 2(1-\lambda^r)^{\lfloor \frac Nr \rfloor}} \ .
\end{align}
Let $\theta$ be a normal state with $\supp\theta \leq p_{[0,m]}$. 
Since $\varphi$ is faithful, there exists $C_1 > 0$ with 
$\theta \leq C_1 \varphi$. 
Since $\varphi$ falls off exponentially, there are 
$C_2,\gamma_2 > 0$ such that $\varphi(p_{[0,M]}^\bot) < C_2^2 e^{-2\gamma_2 M}$ 
for all $M \in \N$. 
Then we have, for $x \in \Bl$ and $M \in \N$, 
\begingroup
\allowdisplaybreaks
\begin{align}
&\left| \sum_{\vec i \in \mathfrak R^{N}_r} \mu(\{\vec i\}) 
	\left(\theta \circ T^{N_1-N} \circ T^{\vec i} 
	- \theta \circ T^{N_2-N} \circ T^{\vec i} \right)(x) \right| \nonumber\\
&\quad = \left| \sum_{k=r}^N \sum_{\vec i \in \{0,1\}^{k-r}} \mu(\{\vec i\}) 
	\left(\theta \circ T^{N_1-k} \circ R^r \circ T^{\vec i} 
	- \theta \circ T^{N_2-k} \circ R^r \circ T^{\vec i} \right)(x) \right| \nonumber\\
&\quad \leq \sum_{k=r}^N \sum_{\vec i \in \{0,1\}^{k-r}} \mu(\{\vec i\}) \biggl( 
	\left| \left(\theta \circ T^{N_1-k} 
		- \theta \circ T^{N_2-k} \right)(p_{[0,M]}R^r(T^{\vec i}(x))p_{[0,M]}) \right| 
	\nonumber\\
	&\qquad +\left| \left(\theta \circ T^{N_1-k} 
		- \theta \circ T^{N_2-k} \right)(p_{[0,M]}^\bot R^r(T^{\vec i}(x))p_{[0,M]}) \right| 
	\nonumber\\
	&\qquad + \left| \left(\theta \circ T^{N_1-k} 
		- \theta \circ T^{N_2-k} \right)(p_{[0,M]}R^r(T^{\vec i}(x))p_{[0,M]}^\bot) \right| 
	\nonumber\\
	&\qquad +  \underbrace{\left| \left(\theta \circ T^{N_1-k} 
		- \theta \circ T^{N_2-k} \right)(p_{[0,M]}^\bot R^r(T^{\vec i}(x))p_{[0,M]}^\bot) \right|}_{\stackrel{(*)}{\leq} 2C_1 C_2e^{-\gamma_2M}\|x\|} \biggr) \nonumber\\
	&\quad\leq \sum_{k=r}^N \sum_{\vec i \in \mathfrak R^{k}_r} \mu(\{\vec i\}) \biggl( 
	\underbrace{\left| \left(\theta \circ T^{N_1-k} - \varphi_R \right)(p_{[0,M]}R^r(T^{\vec i}(x))p_{[0,M]}) \right|}_{\leq e^{\gamma_0M - \gamma_1r} \|x\|} \nonumber\\
		&\qquad+ \underbrace{\left| \left(\theta \circ T^{N_2-k} - \varphi_R \right)(p_{[0,M]}R^r(T^{\vec i}(x))p_{[0,M]}) \right|}_{\leq e^{\gamma_0M - \gamma_1r} \|x\|} \biggr) 
		+ 6C_1 C_2e^{-\gamma_2M}\|x\| \nonumber\\
	&\quad\leq (2C^Mg(r) + 6C_1 C_2e^{-\gamma_2M}) \|x\| \ .
\end{align}
\endgroup
where for (*) the Cauchy-Schwarz inequality was used. 
Altogether, we have 
\begin{align}
\left\| \theta \circ T^{N_1} - \theta \circ T^{N_2} \right\| 
	&\leq 2C e^{\gamma_0M - \gamma_1r} + 6C_1 C_2e^{-\gamma_2M} 
		+ 2(1-\lambda^r)^{\lfloor \frac Nr \rfloor} \ .
\end{align}
Now, choosing $r := \lfloor \frac{\ln N}{-a\ln\lambda} \rfloor$ 
for some $a>1$, one finds 
$(1-\lambda^r)^{\lfloor \frac Nr \rfloor} 
	= \exp(\lfloor \frac Nr \rfloor \cdot \ln(1-\lambda^r))
	\leq \exp(- \lfloor \frac Nr \rfloor \cdot \lambda^r) 
	\in \O(\exp( \frac N{a\ln\lambda} e^{\ln\lambda\frac{\ln N}{-a\ln\lambda}})) 
	= \O(\exp(\frac{N^{1-\frac 1a}}{a\ln\lambda}))$. 
Moreover, putting 
$M := \lfloor \frac{\gamma_1}{\gamma_0 + \gamma_2} \frac{\ln N}{-a\ln\lambda} \rfloor$, 
we obtain 
$e^{\gamma_0M - \gamma_1r}, e^{-\gamma_2M} \in 
	\O(\exp(-\frac{\gamma_1\gamma_2}{\gamma_0 + \gamma_2} \frac{\ln N}{-a\ln\lambda})) 
	= \O(N^{-\gamma(a)})$, 
where 
$\gamma(a) = \frac{\gamma_1\gamma_2}{-a\ln\lambda\cdot(\gamma_0 + \gamma_2)}$. 
\end{proof}

\newcommand{\etalchar}[1]{$^{#1}$}
\renewcommand{\refname}{References}
\providecommand{\bysame}{\leavevmode\hbox to3em{\hrulefill}\thinspace}
\providecommand{\MR}{\relax\ifhmode\unskip\space\fi MR }
% \MRhref is called by the amsart/book/proc definition of \MR.
\providecommand{\MRhref}[2]{%
  \href{http://www.ams.org/mathscinet-getitem?mr=#1}{#2}
}
\providecommand{\href}[2]{#2}

\end{document}